\newcommand{\wo}{\mathbb{E}}
\DeclareMathOperator{\cov}{Cov}
\DeclareMathOperator{\cor}{Cor}
\newtheorem{theorem}{Theorem}
\newtheorem{example}[theorem]{Example}
\newtheorem{remark}[theorem]{Remark}
\def\bP{\mathbb{P}}
\def\bN{\mathbb{N}}
\def\cF{\mathcal{F}}
\begin{document}

\title[Conditional correlation estimation and serial dependence identification]{Conditional correlation estimation and serial dependence identification}

\author{Kewin P\k{a}czek$^{\ast}$}
\author{Damian Jelito$^{\ast}$}
\author{Marcin Pitera$^{\ast}$}
\author{Agnieszka Wy\l{}oma\'{n}ska$^{\dagger}$}
\address{$^{\ast}$Institute of Mathematics, Jagiellonian University, S. {\L}ojasiewicza 6, 30-348 Krak{\'o}w, Poland}
\address{$^{\dagger}$Faculty of Pure and Applied Mathematics, Hugo Steinhaus Center, Wroc{\l}aw University of Science and Technology, Wyspia{\'n}skiego 27, 50-370 Wroc{\l}aw, Poland}
\email{kewin.paczek@im.uj.edu.pl, damian.jelito@uj.edu.pl, marcin.pitera@uj.edu.pl, agnieszka.wylomanska@pwr.edu.pl}

\maketitle

\vspace{-1cm}
\begin{abstract}
It has been recently shown in Jaworski, P., Jelito, D. and Pitera, M. (2024), ‘A note on the equivalence between the conditional uncorrelation and the independence of
random variables’, Electronic Journal of Statistics 18(1),
that one can characterise the independence of random variables via the family of conditional correlations on quantile-induced sets. This effectively shows that the localized linear measure of dependence is able to detect any form of nonlinear dependence for appropriately chosen conditioning sets. In this paper, we expand this concept, focusing on the statistical properties of conditional correlation estimators and their potential usage in serial dependence identification. In particular, we show how to estimate conditional correlations in generic and serial dependence setups, discuss key properties of the related estimators, define the conditional equivalent of the autocorrelation function, and provide a series of examples which prove that the proposed framework could be efficiently used in many practical econometric applications. 
\vspace{0.2cm}

\noindent \textsc{Keywords.} conditional correlation, local correlation, sample correlation, robust statistics, serial independence, autocovariance, autocorrelation, time series analysis
\end{abstract}



\section{Introduction}

The linear Person's correlation coefficient is arguably the most recognized metric used for identification of statistical dependence. This applies to many disciplines, including statistics, time series analysis, pattern recognition, and signal processing, see \cite{BenCheHua2008} and references therein.  In particular, for serial dependence identification, autocovariance and autocorrelation is the core diagnostic tool used in stationary signal analysis, see e.g.~\cite{brockwell2016introduction}.

Although linear correlation is defined as normalized covariance, it has multiple interpretations linked to regression, scoring, and variance ratio evaluation, which makes this metric very appealing to practitioners, see e.g.~\cite{LeeNic1988}. That being said, statistical analysis relying on correlation measurements can be perplexing and result in subtle mistakes if not handled carefully, see, e.g.,  \cite{KotDro2001,Ald1995,Bel2015}. Namely, while correlation is designed to assess the linear relationship between random variables or within time series, it frequently does not adequately capture nonlinear dependencies. Although the copula function can fully describe the dependence, practitioners often prefer to use simpler characteristics, such as correlation, to measure the degree of dependence, see~\cite{Nel2006}. 

To remediate that, many alternative measures have been proposed in the literature and this field is constantly evolving, see e.g. ~\cite{Nel2006,Sca1984,RaoSetXuCheTagPri2011,ZhuXuRunZho2017,AkeBruRobSimWei1984,BabShiSib2004,Gre2005,KenHuaVodHavSta2015} for the definitions of measures of association, concordance measures, entropy correlations, projection correlations, or tail correlations. Recently, in \cite{JawJelPit2022}, it has been shown that one can fully characterize the independence of random variables via the family of conditional correlations on quantile-induced sets; see also~\cite{Tar2024}. This effectively shows that the localized version of the simple correlation coefficient is able to detect any form of non-linear dependence, for appropriately chosen conditioning set. These results complement a series of recent papers focused on conditional moment analysis and their applications, see \cite{JawPit2015,PitCheWyl2021,PacJelPitWyl2022,JawPit2020,HebZimPitWyl2019} for details. Furthermore, in Section 4 of \cite{JawJelPit2022}, a discussion on potential applications of conditional correlation has been presented, and it has been stated that conditional correlations can be used for efficient independence testing and statistical detection of serial dependence.

In this paper, we build on the concept of {\it Quantile Conditional Correlation} (QCC) introduced in \cite{JawJelPit2022}, see~\eqref{eq:cond_cor} for the definition. We focus on the statistical properties of conditional correlation estimators and their potential usage in time series analysis. Our paper also expands the statistical results from~\cite{JelPit2018}, where the properties of univariate conditional variance estimators have been studied, as well as from~\cite{WozJawJelPitWyl2024}, where the properties of conditional sample covariance have been studied under a different univariate conditioning scheme based on a predefined benchmark random variable. In particular, we show how to estimate conditional correlations in generic and serial dependence setups, discuss key properties of the related estimators, and provide a series of examples which prove that this tool could be efficient in many practical econometric frameworks. Note that while the focus of the paper~\cite{JawJelPit2022} is on the theoretical probabilistic properties of QCC, here we show how to estimate QCC and apply it for statistical serial dependence identification.

From a practical statistical perspective, the utilization of QCC offers several advantages over the standard correlation, particularly in the context of identifying serial dependence. Firstly, QCC serves as a natural and robust alternative to classical measures when dealing with data from heavy-tailed distributions. In such cases, the classical measure, namely the Pearson correlation coefficient, is inadequate as it becomes undefined for infinite-variance random variables. Furthermore, as demonstrated in the current paper, QCC can form the basis for autodependence measures and can effectively replace the classical autocovariance function (ACVF) or autocorrelation function by providing their robust equivalents. Autodependence measures are critical in time series analysis because they underpin techniques for theoretical model identification and parameter estimation, see, e.g., \cite{brockwell2016introduction}. Robust techniques are particularly beneficial when dealing with data exhibiting heavy-tailed (impulsive) behaviour. This issue was highlighted in \cite{ZULAWINSKI2024111367}, where a model with heavy-tailed behaviour was discussed in the context of condition monitoring (vibration-based local damage detection), and robust measures were proposed for classical ACVF-based algorithms. The QCC-based approach can also be valuable in this context. Additionally, the approach proposed in this paper is shown to be useful for noise-corrupted models. Models with external additive noise, which may also exhibit impulsive behaviour, are extensively discussed in the literature, see, e.g., \cite{esfandiari,parma_ao0,BarbieriDiversi+2024+65+79,doi:10.1080/02331888.2015.1060237}. External noise is a natural consequence of external factors influencing observations in industrial applications or influences from other markets in financial time series. The presence of external noise significantly diminishes the efficiency of classical methods. The QCC-based methodology meets the demands of contemporary data analysis, as the issues of heavy-tailed behaviour and external noise in data are increasingly highlighted by researchers across various fields, see, e.g., \cite{PhysRevE.91.062809,7054546} and \cite{9716126,doi:10.1080/00207721.2021.2012726,10.1111/j.1467-9868.2011.00775.x}.

This paper is organized as follows. In Section~\ref{S:preliminaries}, we set up the terminology and recall the selected theoretical results of conditional correlations. Then, in Section~\ref{S:cond_cor_est} we introduce the statistical setup, show how to estimate conditional correlations, and present properties of the sample estimators. Next, in Section~\ref{S:cond_autocor} we transfer the results from Section~\ref{S:cond_cor_est} to the serial dependence identification setup, define a conditional version of the autocorrelation function, and prove its basic properties. Finally, in Section~\ref{s4_intro} we show applications of conditional correlations to serial dependence detection by performing power simulation studies and analyzing conditional autocorrelations of S\&P500 stock market data.

\section{Preliminaries}
\label{S:preliminaries}
Let $(\Omega,\cF,\bP)$ be a probability space and let $(X,Y)$ be a generic $\mathbb{R}^2$-valued random vector defined on that space. We use $F$, $F_X$, and $F_Y$ to denote the distribution functions of $(X,Y)$, $X$, and $Y$, respectively. We assume that $F_X$ and $F_Y$ are continuous and invertible, and use $Q_X:=F_X^{-1}$ and $Q_Y:=F_Y^{-1}$ to denote the inverse (quantile) functions of $X$ and $Y$, respectively. Also, for any $B\in \mathcal{F}$, we use $1_B$ to denote the characteristic function of $B$; notation $\overset{\mathbb{P}} {\longrightarrow}$ and $\overset{d}{\longrightarrow}$ is used to denote convergence in probability and in distribution, respectively. 

Throughout this section we fix the quantile splits $0<p_1<q_1<1$ and $0<p_2<q_2<1$ and define the corresponding $(X,Y)$ quantile conditioning set
\begin{equation}\label{eq:set_A}
A:= [Q_X(p_1),Q_X(q_1)] \times [Q_Y(p_2),Q_Y(q_2)].
\end{equation}
For simplicity, we assume that the splits are chosen in such a way that $\bP[(X,Y)\in A]>0$; note this is always the case if $(X,Y)$ has full support. With this split, we associate the quantile conditional mean, variance, and covariance given by
\begin{align}
    \mu_A(Z)&:=\wo[Z|(X,Y)\in A], \label{eq:cond_exp}\\
    \sigma^2_A(Z) &:= \wo[(Z-\mu_A(Z))^2|(X,Y)\in A],\label{eq:cond_var}\\
    \cov_A(X,Y) &:= \wo[\left(X-\mu_A(X)\right) \left(Y-\mu_A(Y)\right)|(X,Y)\in A],\label{eq:cond_cov}
\end{align}
respectively, where $Z$ is some generic random variable on $(\Omega, \mathcal{F},\mathbb{P})$; typically, we consider $Z=X$ or $Z=Y$. Note that $\mu_A(X)$, $\mu_A(Y)$, $\sigma^2_A(X)$, $\sigma^2_A(Y)$, and $\cov_A(X,Y)$ are conditional versions of standard moment-related characteristics of the bivariate distribution of $(X,Y)$, where the conditioning refers to the space-trimming. Next, we define the quantile conditional correlation (QCC) given by
\begin{equation}\label{eq:cond_cor}
\cor_A({X},{Y}):=\frac{\cov_A({{X}},{{Y}})}{\sqrt{\sigma^2_A({X})\sigma^2_A({Y})}}.
\end{equation}
Note that \eqref{eq:cond_cor} always exists and is finite, even if the underlying unconditioned random variables are not square integrable; this follows from the fact that the quantile splits satisfy $0<p_1<q_1<1$ and $0<p_2<q_2<1$. Also, in the limit case when $p_1=p_2=0$ and $q_1=q_2=1$, we recover the standard unconditional correlation (if it exists). As in the standard unconditional setting, the conditional correlation remains invariant under positive transformations of the random vector $(X,Y)$; although  $A$ might change, the set $\{(X,Y) \in A\}$ remains the same.

The usefulness of the conditional correlations is justified by Theorem 1 from~\cite{JawJelPit2022} which is recalled below for clarity. In a nutshell, the theorem states that the family of conditional correlations fully characterizes the independence of $X$ and $Y$.

\begin{theorem}\label{th:independence}
    Random variables $X$ and $Y$ are independent if and only if they are conditionally uncorrelated on every
quantile set, i.e. for any quantile splits $0 <p_1 < q_1 <1 $ and $0 <p_2 < q_2 <1 $ and the related set $A$, we get $\cor_A(X,Y) = 0$.
\end{theorem}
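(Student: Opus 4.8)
\section*{Proof proposal}

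The plan is to prove the two implications separately; the forward direction is elementary, while the converse carries the real content. For the forward direction (independence $\Rightarrow$ conditional uncorrelation) I would exploit that the conditioning set in \eqref{eq:set_A} is a product of intervals, $A = I_X\times I_Y$ with $I_X=[Q_X(p_1),Q_X(q_1)]$ and $I_Y=[Q_Y(p_2),Q_Y(q_2)]$, so that $\{(X,Y)\in A\}=\{X\in I_X\}\cap\{Y\in I_Y\}$. If $X$ and $Y$ are independent these events are independent, and the conditional law of $(X,Y)$ given $(X,Y)\in A$ is the product of the conditional laws of $X$ given $X\in I_X$ and of $Y$ given $Y\in I_Y$. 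Since $I_X,I_Y$ are bounded, $X1_A$, $Y1_A$, $XY1_A$ are integrable, so I may factor $\mu_A(XY)=\mu_A(X)\mu_A(Y)$ and obtain $\cov_A(X,Y)=0$; continuity and strict monotonicity of $F_X,F_Y$ force $\sigma^2_A(X),\sigma^2_A(Y)>0$, so $\cor_A(X,Y)$ is well defined and vanishes.

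For the converse (conditional uncorrelation $\Rightarrow$ independence) I would first rewrite the hypothesis in the unnormalized form $\wo[XY1_A]\,\bP[(X,Y)\in A]=\wo[X1_A]\,\wo[Y1_A]$, and note that, because $Q_X,Q_Y$ map $(0,1)$ bijectively onto the interiors of the supports, this identity is available for \emph{every} rectangle $A=[a,b]\times[c,d]$ with corners in those interiors. The delicate point is that collapsing one side of $A$ to a point makes the corresponding variable nearly constant and renders the identity trivially true to leading order, so no pointwise information survives a first-order expansion and one must reach the second-order term. Concretely, assuming for transparency a joint density $f$, I would fix $[a,b]$, take the $Y$-side to be $[c,c+\delta]$, expand the four moments in $\delta$, and observe that the leading contributions cancel identically; the first nontrivial order leaves the determinant-type identity $\big(\int_a^b \partial_y f(x,c)\,dx\big)\big(\int_a^b x f(x,c)\,dx\big)=\big(\int_a^b x\,\partial_y f(x,c)\,dx\big)\big(\int_a^b f(x,c)\,dx\big)$, valid for all $a<b$ and all $c$.

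From this local identity I would recover the product structure of $f$. Writing $g(x)=f(x,c)$ and $h(x)=\partial_y f(x,c)$, differentiating the determinant identity first in $b$ and then in $a$ collapses it to $g(b)h(a)=g(a)h(b)$ for all $a<b$; hence $h/g=\partial_y\log f(\cdot,c)$ is independent of $x$ for every $c$. This says $\partial_y\log f(x,y)$ does not depend on $x$, so integrating in $y$ yields $f(x,y)=\alpha(x)\beta(y)$, i.e. $f$ factorizes and $X,Y$ are independent.

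The main obstacle, where I expect to spend most of the effort, is twofold: justifying rigorously that the informative content sits at second order (so the expansion and the successive differentiations are legitimate and the cancellations are exact), and removing the density assumption present in the statement. For the latter I would replace the thin-strip expansion by the (conditional) Hoeffding covariance identity, which recasts $\cov_A(X,Y)=0$ as $\int_a^b\int_c^d\big[G(s,t)G(b,d)-G(s,d)G(b,t)\big]\,ds\,dt=0$ with $G(s,t)=\bP[a\le X\le s,\,c\le Y\le t]$, and then run the same differentiation scheme directly on $G$; alternatively, one can mollify $(X,Y)$, apply the smooth argument, and pass to the limit. Either way, the crux is converting the family of determinant identities indexed by all rectangles into the separability of the joint law.
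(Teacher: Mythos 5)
First, a framing point: the paper itself contains no proof of this statement --- Theorem~\ref{th:independence} is quoted from Jaworski, Jelito and Pitera (2024) and used as a black box --- so there is no in-paper argument to compare yours against; I can only judge your proposal on its own terms and against the cited reference. Your forward direction is correct and complete. Your converse is also correct \emph{under the extra assumption of a strictly positive, suitably smooth joint density}: I checked the thin-strip expansion, and indeed for $A=[a,b]\times[c,c+\delta]$ one gets $\cov_A(X,Y)=\tfrac{\delta^2}{12}\,(P_0P_2-P_1Q_0)/P_0^2+O(\delta^3)$ with $P_0=\int_a^b f(x,c)\,dx$, $P_1=\int_a^b xf(x,c)\,dx$, $Q_0=\int_a^b \partial_yf(x,c)\,dx$, $P_2=\int_a^b x\partial_yf(x,c)\,dx$; the first-order term cancels identically, and the second-order coefficient is exactly your determinant identity. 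Writing that identity as $\int_a^b\int_a^b (x'-x)\,g(x)h(x')\,dx\,dx'=0$ shows that your mixed differentiation $\partial_a\partial_b$ yields $(b-a)\bigl[g(b)h(a)-g(a)h(b)\bigr]=0$, and the factorization of $f$ follows wherever $f>0$. So the smooth case is sound.

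The genuine gap is the general case, and it is not a technicality: the theorem assumes only continuous, invertible marginals, so the joint law may be singular or have a nowhere-differentiable density, and the entire mechanism of your proof (Taylor expansion in $\delta$, differentiation in $a$ and $b$, division by $g$) is unavailable. Of your two proposed repairs, mollification does not obviously work: convolving $(X,Y)$ with smooth noise changes both the quantile rectangles and the conditional covariances in an uncontrolled way, so the hypothesis ``$\cov_A=0$ on all quantile sets'' is not visibly inherited by the mollified vector, and without that there is nothing to pass to the limit. The Hoeffding-identity route is the right instinct --- it is essentially how the cited reference avoids densities, by working at the level of distribution functions and one-sided derivatives at the corners of $A$ rather than with $f$ and $\partial_y f$ --- but as written you would still need to differentiate $G$ twice classically, which reintroduces the regularity you are trying to remove; making this step measure-theoretically honest is precisely the content you have not supplied. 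A second, smaller gap: your local argument only gives $\partial_y\log f$ independent of $x$ on the set $\{f>0\}$, and gluing this into global independence when the joint support is not a product of intervals (which invertibility of the marginals does not preclude) requires an additional argument that is missing.
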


 Theorem~\ref{th:independence} indirectly shows that conditional moments could be used to identify the distribution or to extract its characteristics. This was recently used to design some new estimation and goodness-of-fit procedures for heavy-tailed data; see e.g~\cite{PitCheWyl2021} or \cite{PacJelPitWyl2022}. In the next section, we will demonstrate how to use the sample version of~\eqref{eq:cond_cor} to statistically assess the independence of random samples.

\begin{remark}[Closed-form formulas for conditional moments]
The analytic conditional moment formulas for univariate truncations and quantile conditionings are provided in the literature for multiple families of distributions, see e.g.~\cite{Nad2009}. Their bivariate counterparts are typically harder to compute and are available only for selected types of truncations and families. That saying, it is worth mentioning that the explicit formula for conditional moments of the multivariate normal vector can be found in \cite{Manjunath2021}, see formulas (17) and (18) therein.
\end{remark}


\section{Estimation of conditional correlations}\label{S:cond_cor_est}

In this section, we show how to estimate the quantile conditional correlation  and discuss its properties. Throughout this section, for any $n\in\bN\setminus \{0\}$, we assume that we are given an independent and identically distributed (i.i.d.) bivariate sample from $(X,Y)$ and denote it by
\begin{equation}\label{eq:sample}
(\mathbf{X},\mathbf{Y}):=(X_i,Y_i)_{i=1}^n.
\end{equation}
Similarly, for an arbitrary univariate vector $Z$ and $n\in\bN\setminus \{0\}$, we use $\mathbf{Z}:=(Z_1, \ldots, Z_n)$ to denote its sample. We maintain the dependence structure, i.e. $(\mathbf{X},\mathbf{Y},\mathbf{Z})$ is effectively a sample from $(X,Y,Z)$; in fact, we typically set $\mathbf{Z}=\mathbf{X}$ or $\mathbf{Z}=\mathbf{Y}$, so that the information contained in the original sample \eqref{eq:sample} is sufficient. For an arbitrary sample $\mathbf{Z}$, we use $Z_{(k)}$ to denote the corresponding $k$th order statistic (i.e. the $k$th smallest element in the sample), for $k=1, \ldots, n$. Given a sample $(\mathbf{X},\mathbf{Y})$ as well as quantile splits $0<p_1<q_1<1$ and $0<p_2<q_2<1$, we define the empirical equivalent of the conditioning set $A$ given in \eqref{eq:set_A.hat}, by setting 
\begin{equation}\label{eq:set_A.hat}
\hat{A} :=[X_{([np_1]+1)},X_{([nq_1])}] \times [Y_{([np_2]+1)},Y_{([nq_2])}],
\end{equation}
where $[a]:=\sup\{k\in \mathbb{Z}\colon k\leq a\}$ is the integer part of $a\in \mathbb{R}$. Note that $\hat{A}$ could be seen as a sample version of the set $A$ with random rectangle vertices given via the standard empirical quantile estimators applied separately to each coordinate. Using \eqref{eq:set_A.hat}, we define sample empirical equivalents of \eqref{eq:cond_exp}, \eqref{eq:cond_var}, and \eqref{eq:cond_cov}. Namely, we set
\begin{align}
    \textstyle\hat\mu_{A}(\mathbf{Z})&\textstyle :=\frac{1}{\sum_{i=1}^n 1_{\hat{A}}(X_i,Y_i) } \sum_{i=1}^n Z_i 1_{\hat{A}}(X_i,Y_i), \label{eq:cond_exp_emp}\\
    \textstyle\hat\sigma^2_A(\mathbf{Z}) &\textstyle:= \frac{1}{\sum_{i=1}^n 1_{\hat{A}}(X_i,Y_i)} \sum_{i=1}^n (Z_i - \hat{\mu}_{A}(\mathbf{Z}))^2  1_{\hat{A}}(X_i,Y_i),\label{eq:cond_var_emp}\\
    \textstyle\widehat\cov_A(\mathbf{X},\mathbf{Y}) &\textstyle:= \frac{1}{\sum_{i=1}^n 1_{\hat{A}}(X_i,Y_i)} \sum_{i=1}^n (X_i - \hat{\mu}_{A}(\mathbf{X})) (Y_i - \hat{\mu}_{A}(\mathbf{Y})) 1_{\hat{A}}(X_i,Y_i),\label{eq:cond_cov_emp}
\end{align}
with the convention that $\hat\mu_{A}(\mathbf{Z})=\hat\sigma^2_A(\mathbf{Z})=\widehat\cov_A(\mathbf{X},\mathbf{Y})=0$ if $\sum_{i=1}^n 1_{\hat{A}}(X_i,Y_i)=0$. Recall that the splits have been chosen in such a way that $\bP[(X,Y)\in A]>0$, which implies that $\bP[\sum_{i=1}^n 1_{\hat{A}}(X_i,Y_i)>0] \nearrow 1$ as $n\to \infty$. Also, recall that for $\mathbf{Z}=\mathbf{X}$ or $\mathbf{Z}=\mathbf{Y}$, the information contained in \eqref{eq:sample} is sufficient to estimate \eqref{eq:cond_exp_emp}, \eqref{eq:cond_var_emp}, and \eqref{eq:cond_cov_emp}. Still, even for $\hat\mu_{A}(\mathbf{X})$ (or $\hat\mu_{A}(\mathbf{Y})$) we need information encoded in both margins of $(\mathbf{X},\mathbf{Y})$, as the set $\hat A$ is a double quantile-truncated set. Next, using the same logic as in \eqref{eq:cond_cor}, we define the sample conditional correlation estimator 
\begin{equation}\label{eq:cond_cor_est}
\widehat\cor_A(\mathbf{X},\mathbf{Y}):=\frac{\widehat\cov_A({\mathbf{X}},{\mathbf{Y}})}{\sqrt{\hat\sigma^2_A(\mathbf{X})\hat\sigma^2_A(\mathbf{Y})}},
\end{equation}
with the convention $\widehat\cor_A(\mathbf{X},\mathbf{Y})=0$ if $\sum_{i=1}^n 1_{\hat{A}}(X_i,Y_i)=0$.

Now, let us discuss selected asymptotic properties of~\eqref{eq:cond_cor_est}.  For brevity, we introduce supplementary notation
\begin{align}
    \textstyle\bar\mu_{A}(\mathbf{Z})&\textstyle:=\frac{1}{\sum_{i=1}^n 1_{{A}}(X_i,Y_i) } \sum_{i=1}^n Z_i 1_{{A}}(X_i,Y_i), \label{eq:cond_exp_emp_known}\\
    \textstyle\bar\sigma^2_A(\mathbf{Z}) &\textstyle:= \frac{1}{\sum_{i=1}^n 1_{{A}}(X_i,Y_i)} \sum_{i=1}^n (Z_i - \bar{\mu}_{A}(\mathbf{Z}))^2  1_{{A}}(X_i,Y_i),\label{eq:cond_var_emp_known}\\
    \textstyle\overline\cov_A(\mathbf{X},\mathbf{Y}) & \textstyle:= \frac{1}{\sum_{i=1}^n 1_{{A}}(X_i,Y_i)} \sum_{i=1}^n (X_i - \bar{\mu}_{A}(\mathbf{X})) (Y_i - \bar{\mu}_{A}(\mathbf{Y})) 1_{{A}}(X_i,Y_i)\label{eq:cond_cov_emp_known}
\end{align}
with the same convention as in~\eqref{eq:cond_exp_emp}--\eqref{eq:cond_cov_emp}. Note that estimators  \eqref{eq:cond_exp_emp_known}--\eqref{eq:cond_cov_emp_known} are versions of \eqref{eq:cond_exp_emp}--\eqref{eq:cond_cov_emp} in which we replaced the (random) set $\hat A$ with $A$, i.e. these modified estimators pre-assume full-knowledge about the quantile split structure. Similarly, we define
\begin{equation}\label{eq:cond_cor_est_known}
\overline\cor_A(\mathbf{X},\mathbf{Y}):=\frac{\overline\cov_A({\mathbf{X}},{\mathbf{Y}})}{\sqrt{\overline\sigma^2_A(\mathbf{X})\overline\sigma^2_A(\mathbf{Y})}}.
\end{equation}
Essentially, by considering estimators \eqref{eq:cond_exp_emp_known}--\eqref{eq:cond_cor_est_known} we can split~\eqref{eq:cond_exp_emp}--\eqref{eq:cond_cor_est} into the component with deterministic sum and the residual component; this split is a common setup in the conditional moment analysis, see e.g.~\cite{Stigler1973} and~\eqref{eq:th:consistency:1} in this paper for details. 

We are now ready to show the consistency property.

\begin{theorem}\label{th:consistency}
    Let $\cor_A(X,Y)$ and $\widehat\cor_A(\mathbf{X},\mathbf{Y})$ be given by~\eqref{eq:cond_cov} and~\eqref{eq:cond_cov_emp}, respectively.  Then, we get
    \[
    \widehat\cor_A(\mathbf{X},\mathbf{Y})\overset{\mathbb{P}}{\longrightarrow} \cor_A(X,Y), \quad n\to \infty.
    \]
\end{theorem}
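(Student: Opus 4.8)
The plan is to view $\widehat\cor_A(\mathbf{X},\mathbf{Y})$ as a fixed continuous function of a handful of empirical averages and to invoke the continuous mapping theorem together with Slutsky's lemma, once each average is shown to converge in probability to its population analogue. Concretely, in view of~\eqref{eq:cond_cor_est} it is enough to establish the three convergences $\widehat\cov_A(\mathbf{X},\mathbf{Y})\overset{\mathbb{P}}{\longrightarrow}\cov_A(X,Y)$, $\hat\sigma^2_A(\mathbf{X})\overset{\mathbb{P}}{\longrightarrow}\sigma^2_A(X)$, and $\hat\sigma^2_A(\mathbf{Y})\overset{\mathbb{P}}{\longrightarrow}\sigma^2_A(Y)$. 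The conclusion then follows because the population denominator $\sqrt{\sigma^2_A(X)\sigma^2_A(Y)}$ is strictly positive: since $F_X$ and $F_Y$ are continuous and invertible, the conditional laws of $X$ and $Y$ on $A$ are non-degenerate, so $\sigma^2_A(X),\sigma^2_A(Y)>0$ and the map $(u,v,w)\mapsto u/\sqrt{vw}$ is continuous at the limiting point. The degenerate event $\{\sum_{i=1}^n 1_{\hat A}(X_i,Y_i)=0\}$, on which the estimator equals $0$ by convention, has probability tending to $0$ (as recalled in the text) and is asymptotically negligible.

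Each of the three target quantities is itself a fixed rational function of the building blocks $\tfrac1n\sum_i g(X_i,Y_i)1_{\hat A}(X_i,Y_i)$ for $g\in\{1,X,Y,X^2,Y^2,XY\}$, where I use $\hat\sigma^2_A(\mathbf{Z})=\tfrac1N\sum X_i^2 1_{\hat A}-\hat\mu_A(\mathbf{Z})^2$ and the analogous identity for the covariance, with $N=\sum_i 1_{\hat A}(X_i,Y_i)$. Thus the whole argument reduces to proving that, for each such $g$,
\[
\textstyle\frac1n\sum_i g(X_i,Y_i)1_{\hat A}(X_i,Y_i)\overset{\mathbb{P}}{\longrightarrow}\wo\!\left[g(X,Y)1_A(X,Y)\right],
\]
and then reassembling the pieces via repeated application of Slutsky's lemma and the continuous mapping theorem, the denominator being controlled through the $g=1$ block $\tfrac1n\sum_i 1_{\hat A}(X_i,Y_i)\overset{\mathbb{P}}{\longrightarrow}\bP[(X,Y)\in A]>0$.

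To treat a single building block I would follow the decomposition suggested by the auxiliary estimators~\eqref{eq:cond_exp_emp_known}--\eqref{eq:cond_cor_est_known}, writing
\[
\textstyle\frac1n\sum_i g(X_i,Y_i)1_{\hat A}(X_i,Y_i)=\frac1n\sum_i g(X_i,Y_i)1_{A}(X_i,Y_i)+\frac1n\sum_i g(X_i,Y_i)\bigl(1_{\hat A}-1_{A}\bigr)(X_i,Y_i).
\]
The first, fixed-set term converges to $\wo[g(X,Y)1_A(X,Y)]$ by the weak law of large numbers; the expectation is finite because on $A$ the coordinates are confined to the bounded intervals $[Q_X(p_1),Q_X(q_1)]$ and $[Q_Y(p_2),Q_Y(q_2)]$, so $g(X,Y)1_A(X,Y)$ is bounded even when $X$ and $Y$ are heavy-tailed. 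This is the routine part.

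The main obstacle is the residual term, which quantifies the cost of replacing the true quantile boundaries by the empirical ones, and I would control it by a boundary-thickness estimate. By the standard consistency of sample quantiles (valid since $F_X,F_Y$ are continuous and strictly increasing), the endpoints $X_{([np_1]+1)},X_{([nq_1])},Y_{([np_2]+1)},Y_{([nq_2])}$ converge in probability to $Q_X(p_1),Q_X(q_1),Q_Y(p_2),Q_Y(q_2)$. Hence, for fixed $\varepsilon>0$, on an event of probability tending to $1$ all four endpoints lie within $\varepsilon$ of their targets; on that event $\hat A\cup A$ is contained in the $\varepsilon$-enlargement of $A$, on which $|g(X,Y)|\le M_\varepsilon$ for a constant $M_\varepsilon$, while $\hat A\triangle A$ is contained in a fixed frame $G_\varepsilon$ of width $2\varepsilon$ around $\partial A$. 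Consequently the residual is bounded by $M_\varepsilon\cdot\tfrac1n\sum_i 1_{G_\varepsilon}(X_i,Y_i)$, whose limit $M_\varepsilon\,\bP[(X,Y)\in G_\varepsilon]$ can be made arbitrarily small since $\bP[(X,Y)\in G_\varepsilon]\downarrow\bP[(X,Y)\in\partial A]=0$ as $\varepsilon\downarrow0$, using the continuity of the marginals. A routine $\varepsilon$--$n$ argument then makes the residual $o_{\mathbb{P}}(1)$, which combined with the fixed-set term yields the convergence of each building block and, after reassembly, the theorem.
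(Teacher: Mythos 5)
Your proposal is correct, and its skeleton coincides with the paper's: reduce to the three moment convergences plus the continuous mapping theorem, express everything through building blocks of the form $\tfrac1n\sum_i g(X_i,Y_i)1_{\hat A}(X_i,Y_i)$, split each block into the fixed-set term $\tfrac1n\sum_i g(X_i,Y_i)1_{A}(X_i,Y_i)$ (handled by the law of large numbers, using boundedness on $A$) and a residual. Where you diverge is in the treatment of the residual. The paper bounds the \emph{number} of indices at which $1_{\hat A}$ and $1_A$ disagree by rank discrepancies such as $|p_1^n-[np_1]|$, where $p_1^n$ counts observations below the true quantile, multiplies by a random envelope $M_n$ built from order statistics near the quantile boundaries (which converges in probability to a finite constant), and concludes after dividing by $n$. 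You instead use consistency of the sample quantiles to trap $\hat A\triangle A$ inside a deterministic frame $G_\varepsilon$ around $\partial A$ with high probability, bound $|g|$ by a constant $M_\varepsilon$ on the $\varepsilon$-enlargement of $A$, and let $\varepsilon\downarrow 0$ using $\bP[(X,Y)\in\partial A]=0$ (which follows from continuity of the marginals). Both arguments are valid; the paper's counting argument is more quantitative (the discrepancy count is $O_{\mathbb{P}}(\sqrt{n})$, hinting at a rate), while your boundary-frame argument is the cleaner portmanteau-style route and avoids introducing the auxiliary ranks $p_1^n,\dots,q_2^n$ and the envelope $M_n$. Two small merits of your write-up beyond the paper's: you explicitly justify $\sigma^2_A(X),\sigma^2_A(Y)>0$ (needed for continuity of $(u,v,w)\mapsto u/\sqrt{vw}$ at the limit, and left implicit in the paper) and you explicitly dispose of the degenerate event $\{\sum_i 1_{\hat A}(X_i,Y_i)=0\}$. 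The only blemish is notational: in your variance identity the summand should be $Z_i^2$ rather than $X_i^2$.
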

\begin{proof}
    First, note that it is enough to show $ \widehat\cov_A(X,Y)\overset{\mathbb{P}}{\to} \cov_A(X,Y)$, $ \hat\sigma^2_A(X)\overset{\mathbb{P}}{\to}\sigma^2_A(X)$, and $\hat\sigma^2_A(Y)\overset{\mathbb{P}}{\to}\sigma^2_A(Y)$  as $n\to\infty$ and use the continuous mapping theorem. In fact, we focus on the consistency of the sample conditional covariance; the remaining results could be shown using similar techniques.
Thus, let us show that $\widehat\cov_A(\mathbf{X},\mathbf{Y})\overset{\mathbb{P}}{\longrightarrow} \cov_A(X,Y)$. Noting that
\begin{equation}\label{eq:th:consistency:0}
        \widehat\cov_A(\mathbf{X},\mathbf{Y}) = \frac{n}{\sum_{i=1}^n 1_{\hat{A}}(X_i,Y_i)} \frac{1}{n}\sum_{i=1}^n X_i Y_i 1_{\hat{A}}(X_i,Y_i)- \hat{\mu}_{A}(\mathbf{X})\hat{\mu}_{A}(\mathbf{Y}),
\end{equation}
it is enough to show that $\frac{1}{n}\sum_{i=1}^n X_i Y_i 1_{\hat{A}}(X_i,Y_i)\overset{\mathbb{P}}{\to} \mathbb{E}[XY 1_A(X,Y)]$, $\frac{1}{n}\sum_{i=1}^n 1_{\hat{A}}(X_i,Y_i)\overset{\mathbb{P}}{\to} \mathbb{P}[(X,Y)\in A]$, $\hat{\mu}_{A}(\mathbf{X})\overset{\mathbb{P}}{\to}{\mu}_{A}(X)$, and $\hat{\mu}_{A}(\mathbf{Y})\overset{\mathbb{P}}{\to}{\mu}_{A}(Y)$ as $ n\to\infty$. For brevity, we only show the first convergence
         \[
         \frac{1}{n}\sum_{i=1}^n X_i Y_i 1_{\hat{A}}(X_i,Y_i)\overset{\mathbb{P}}{\to} \mathbb{E}[XY 1_A(X,Y)],\quad n\to\infty;
        \]
        the remaining arguments are similar. To do so, we first note that
        \begin{align}\label{eq:th:consistency:1}
            \frac{1}{n}\sum_{i=1}^n X_i Y_i 1_{\hat{A}}(X_i,Y_i) = \frac{1}{n}\sum_{i=1}^n X_i Y_i 1_{A}(X_i,Y_i) + \frac{1}{n}\sum_{i=1}^n X_i Y_i \left(1_{\hat{A}}(X_i,Y_i)-1_{A}(X_i,Y_i)\right).
        \end{align}
        Also, as $(X_i Y_i 1_{A}(X_i,Y_i))_{i=1}^\infty$ is an i.i.d. sequence, by the law of large numbers, we get
        \[
        \frac{1}{n}\sum_{i=1}^n X_i Y_i 1_{A}(X_i,Y_i)\overset{\mathbb{P}}{\to} \mathbb{E}[XY 1_A(X,Y)],\quad n\to\infty.
        \]
        Thus, to conclude the proof, it is enough to show $\frac{1}{n}\sum_{i=1}^n X_i Y_i \left(1_{\hat{A}}(X_i,Y_i)-1_{A}(X_i,Y_i)\right)\overset{\mathbb{P}}{\to} 0$ as $n\to\infty$. Let $A_X:=[Q_X(p_1),Q_X(q_1)]$, $A_Y:=[Q_Y(p_2),Q_Y(q_2)]$, $\hat A_X:=[X_{([np_1]+1)},X_{([nq_1])}]$, and $\hat A_X:=[Y_{([np_2]+1)},Y_{([nq_2])}]$. Then, we get
\begin{align}\label{eq:lm:conv_normalisation:1}
      \left|\sum_{i=1}^n X_i Y_i\left(I_A(X_i,Y_i) - I_{\hat{A}}(X_i,Y_i)\right)\right| & \leq 
      \sum_{i=1}^n M_n \left(1_{\{X_i \in A_X, X_i\notin \hat A_X\}}+1_{\{X_i \notin A_X, X_i\in \hat A_X\}}+1_{\{Y_i \in A_Y, Y_i\notin \hat A_Y\}}+1_{\{Y_i \notin A_Y, Y_i\in \hat A_Y\}}\right)\nonumber\\
      &\leq M_n\left(2|p_1^n -[np_1 ]|+2|q_1^n -[nq_1 ]| + 2|p_2^n -[np_2 ]|+2|[nq_2] - q_2^n |\right).
\end{align}
where $M_n:= M_n^X  M_n^Y$, and the supplementary notation
\[
p^n_1:= \sum_{i=1}^n 1_{\{X_i \leq Q_X(p_1)\}}, \quad
q^n_1:= \sum_{i=1}^n 1_{\{X_i \leq Q_X(q_1)\}}, \quad
p^n_2 :=\sum_{i=1}^n 1_{\{Y_i \leq Q_Y(p_2)\}}, \quad
q^n_2 :=\sum_{i=1}^n 1_{\{Y_i \leq Q_Y(q_2)\}},
\]
\[
M_n^X:=\max(|X_{([np_1])}|,|X_{([nq_1])}|,|X_{(p_1^n)}|,|X_{(q_1^n)}|), \quad
M_n^Y:=\max(|Y_{([np_2])}|,|Y_{([nq_2])}|,|Y_{(p_2^n)}|,|Y_{(q_2^n)}|),
\]
for $n\in\bN$ is used. Note that, by the law of large numbers, we get
\begin{equation*}\label{eq:lm:conv_normalisation:2}
    \tfrac{1}{n}p_1^n\overset{\mathbb{P}}{\to}p_1, \quad \tfrac{1}{n}q_1^n\overset{\mathbb{P}}{\to}q_1, \quad \tfrac{1}{n}p_2^n\overset{\mathbb{P}}{\to}p_2, \quad\tfrac{1}{n}q_2^n\overset{\mathbb{P}}{\to}q_2, \quad \text{as} \quad n\to\infty.
\end{equation*}
Also, we get
\begin{equation*}\label{eq:lm:conv_normalisation:3}
\tfrac{1}{n}[np_1 ]\to p_1, \quad \tfrac{1}{n}[nq_1 ]\to q_1, \quad \tfrac{1}{n}[np_2 ]\to p_2, \quad \tfrac{1}{n}[nq_2 ]\to q_2, \quad \text{as} \quad n\to\infty.
\end{equation*}
Finally, by the consistency of the quantile estimators, we get $M_n\overset{\mathbb{P}}{\to} \max(|Q_X(p_1)|,|Q_X(q_1)|)\cdot \max(|Q_Y(p_2)|,|Q_Y(q_2)|)$.
Consequently, dividing~\eqref{eq:lm:conv_normalisation:1} by $n$ and letting $n\to\infty$, we get 
\begin{align}\label{eq:th:conv_normalisation:4}
    \frac{1}{n}\sum_{i=1}^n X_i Y_i \left(1_{\hat{A}}(X_i,Y_i)-1_{A}(X_i,Y_i)\right)\overset{\mathbb{P}}{\to} 0, \quad n\to\infty,
\end{align}
which concludes the proof.
\end{proof}
Next, let us show that, when the set $A$ is known, we get the asymptotic normality of the conditional correlation estimator. 

\begin{theorem}\label{th:normality_known_set}
    Let $\cor_A(X,Y)$ and $\overline\cor_A(\mathbf{X},\mathbf{Y})$ be given by~\eqref{eq:cond_cov} and~\eqref{eq:cond_cov_emp_known}, respectively. Then, for some $\tau>0$, we get
    \[
    \sqrt{n}\left(\overline\cor_A(\mathbf{X},\mathbf{Y})-\cor_A(X,Y)\right)\overset{d}{\longrightarrow} \mathcal{N}(0,\tau), \quad n\to \infty.
    \]
\end{theorem}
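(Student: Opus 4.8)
The plan is to exploit the fact that, once the conditioning set $A$ is deterministic, $\overline\cor_A(\mathbf{X},\mathbf{Y})$ is a fixed smooth function of a handful of genuine sample means of \emph{bounded} i.i.d.\ random variables, so that the statement reduces to the multivariate central limit theorem (CLT) combined with the delta method. First I would collect the relevant statistics into the $\bR^6$-valued i.i.d.\ vectors
\[
W_i := \big(1_A(X_i,Y_i),\; X_i 1_A(X_i,Y_i),\; Y_i 1_A(X_i,Y_i),\; X_i^2 1_A(X_i,Y_i),\; Y_i^2 1_A(X_i,Y_i),\; X_i Y_i 1_A(X_i,Y_i)\big),
\]
and set $\bar W_n:=\tfrac1n\sum_{i=1}^n W_i$ and $m:=\bE[W_1]$. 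Since $A=[Q_X(p_1),Q_X(q_1)]\times[Q_Y(p_2),Q_Y(q_2)]$ is a bounded rectangle (the quantiles are finite because $0<p_1<q_1<1$ and $0<p_2<q_2<1$), every coordinate of $W_1$ is bounded, hence square integrable, so the multivariate CLT gives $\sqrt n(\bar W_n-m)\overset{d}{\to}\mathcal N(0,\Sigma)$ with $\Sigma:=\cov(W_1)$.

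Next I would write the estimator as a function of $\bar W_n$. Denoting the coordinates of $\bar W_n$ by $(S_0,S_X,S_Y,S_{XX},S_{YY},S_{XY})$, one has $\overline\mu_A(\mathbf{X})=S_X/S_0$, $\overline\mu_A(\mathbf{Y})=S_Y/S_0$, $\overline\sigma^2_A(\mathbf{X})=S_{XX}/S_0-(S_X/S_0)^2$, $\overline\sigma^2_A(\mathbf{Y})=S_{YY}/S_0-(S_Y/S_0)^2$, and $\overline\cov_A(\mathbf{X},\mathbf{Y})=S_{XY}/S_0-(S_XS_Y)/S_0^2$, so that $\overline\cor_A(\mathbf{X},\mathbf{Y})=g(\bar W_n)$ for the explicit map $g\colon\bR^6\to\bR$ obtained by substituting these into~\eqref{eq:cond_cor_est_known}; the same substitution at $m$ gives $g(m)=\cor_A(X,Y)$, since $m_1=\bP[(X,Y)\in A]$, $m_2/m_1=\mu_A(X)$, and so on. The map $g$ is continuously differentiable in a neighbourhood of $m$: the denominator $S_0$ stays bounded away from $0$ near $m_1=\bP[(X,Y)\in A]>0$, and the radicand $\overline\sigma^2_A(\mathbf{X})\,\overline\sigma^2_A(\mathbf{Y})$ stays bounded away from $0$ near its limit $\sigma^2_A(X)\sigma^2_A(Y)>0$ — the latter because continuity of $F_X,F_Y$ makes the conditional marginal laws on $A$ non-atomic, hence non-degenerate. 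Since $S_0=N_n/n\to m_1>0$ almost surely, the zero-convention on $\{N_n=0\}$ does not affect the limit.

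With these ingredients, the delta method immediately yields
\[
\sqrt n\big(\overline\cor_A(\mathbf{X},\mathbf{Y})-\cor_A(X,Y)\big)\overset{d}{\to}\mathcal N(0,\tau),\qquad \tau=\nabla g(m)^{\top}\Sigma\,\nabla g(m)\ge 0,
\]
which already secures the asymptotic normality and produces an explicit formula for $\tau$.

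The hard part will be the strict positivity $\tau>0$. Equivalently, I must show that the influence function $L:=\nabla g(m)^{\top}(W_1-m)$ is not almost surely constant: on $\{(X,Y)\notin A\}$ it equals the constant $-\nabla g(m)^{\top}m$, whereas on $A$ it is a fixed quadratic in $(X,Y)$, so $\tau=0$ would force this quadratic to be a.s.\ constant on $A$ and to coincide with the off-$A$ value. This can genuinely fail in degenerate situations — e.g.\ when $X$ and $Y$ are a.s.\ affinely related on $A$, in which case $\overline\cor_A\equiv\pm1$ and the left-hand side is identically zero — so I would either impose a mild non-degeneracy assumption on the conditional law of $(X,Y)$ given $A$ or establish $\tau>0$ under it. Ruling out this exceptional affine/quadratic relation is the only step that goes beyond the routine CLT-plus-delta-method machinery.
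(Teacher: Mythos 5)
Your proof is correct and follows essentially the same route as the paper: you package the same six statistics into an i.i.d.\ vector $W_i$, apply the multivariate CLT, and conclude by the delta method with $\tau=\nabla g(m)^{\top}\Sigma\,\nabla g(m)$. In fact you are more careful than the paper on one point: the paper simply asserts that $\Sigma$ is positive definite and that $\tau>0$, whereas you correctly observe that strict positivity can fail in degenerate cases (e.g.\ $X$ and $Y$ affinely related on $A$) and therefore requires a mild non-degeneracy hypothesis that the stated theorem leaves implicit.
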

\begin{proof}
First note that, by the analogy to~\eqref{eq:th:consistency:0}, we have
\begin{align*}
            \overline\cov_A(\mathbf{X},\mathbf{Y}) = \frac{n}{\sum_{i=1}^n 1_{{A}}(X_i,Y_i)} \frac{1}{n}\sum_{i=1}^n X_i Y_i 1_{{A}}(X_i,Y_i)- \bar{\mu}_{A}(\mathbf{X})\bar{\mu}_{A}(\mathbf{X}). 
\end{align*}
Setting $W_i:=[1_{{A}}(X_i,Y_i), X_i 1_{{A}}(X_i,Y_i), Y_i 1_{{A}}(X_i,Y_i), X_i Y_i 1_{{A}}(X_i,Y_i), X_i^21_{{A}}(X_i,Y_i), Y_i^21_{{A}}(X_i,Y_i)]^{T}$, for $i\in \mathbb{N}\setminus \{0\}$, we get
\[
\overline{\cor}_A(\mathbf{X},\mathbf{Y})=g\left(\frac{1}{n}\sum_{i=1}^n W_i\right),
\]
where $g(t_1,t_2,t_3,t_4,t_5,t_6):=\frac{t_4/t_1-(t_2/t_1)(t_3/t_1)}{\sqrt{t_5/t_1-(t_2/t_1)^2}\sqrt{t_6/t_1-(t_3/t_1)^2}}$, for $t_1>0$, $t_2,t_3,t_4, t_5, t_6\in \mathbb{R}$. Next, noting that $(W_i)$ is an i.i.d. sequence of square integrable random vectors, by the multidimensional central limit theorem, we get
    \[
    \sqrt{n}\left( \frac{1}{n}  \sum_{i=1}^n W_i -\Theta\right)  \overset{d}{\to} \mathcal{N}_6(0,\Sigma), \quad n\to \infty,
    \]
    where $\Theta:=[
        \mathbb{P}[(X,Y)\in A], \mathbb{E}[X 1_{\{(X,Y)\in A\}}], \mathbb{E}[Y 1_{\{(X,Y)\in A\}}], \mathbb{E}[XY 1_{\{(X,Y)\in A\}}],\mathbb{E}[X^2 1_{\{(X,Y)\in A\}}],\mathbb{E}[Y^2 1_{\{(X,Y)\in A\}}]]^T$ and $\Sigma$ is some positive-definite ($6\times 6$) covariance matrix. Consequently, using the delta method (see e.g. Theorem 7 in~\cite{Ferguson1996}), we get
    \begin{align}\label{eq:th:norm_known_set:5}
\sqrt{n}\left(\overline\cor_A(X,Y)-\cor_A(X,Y)\right) = \sqrt{n}\left(g\left(\frac{1}{n}  \sum_{i=1}^n W_i\right)-g(\Theta)\right)\overset{d}{\to} \mathcal{N}(0,\tau), \quad n\to \infty,
    \end{align}
    for $\tau:= \nabla g(\Theta))^T \Sigma \nabla g(\Theta)$,
    which concludes the proof.
\end{proof}

It should be noted that Theorem~\ref{th:normality_known_set} shows the asymptotic normality (and, \textit{a fortiori}, the consistency) of the conditional sample covariance when the conditioning is based on a fixed set $A$ instead of a sample-dependent set $\hat{A}$.  While one expects that similar results hold also for $\hat{A}$ and  $\widehat\cov_A(\mathbf{X},\mathbf{Y})$, we were not able to show this result using the standard delta-method together with multidimensional Central Limit Theorem. That said, one does not expect that the estimation risk embedded in the estimation of set $A$ plays a dominant role in covariance estimation risk. Let us now show two simple examples that illustrate the consistency property together with the rate of the corresponding convergence, as well as estimation risks.

\begin{example}[Bivariate normal vector] \label{ex:MVN}
Let $(X,Y)\sim \mathcal{N}_2( \mu, \Sigma)$, where $\mu=(0.5,0.5)^T$ and $\Sigma$ is such that $\sigma^2(X)=\sigma^2(Y)=1$ and $\cov(X,Y)=0.4$. Let $p_1=p_2=\Phi(0.05)$ and $q_1=q_2=\Phi(0.75)$ so that we get $A = [0.05,0.75] \times [0.05,0.75]$. Using simple Monte Carlo estimation, we can approximate conditional moments of $(X,Y)$. In particular, for an exemplary Monte Carlo sample of size $N=1000$, we get
\[
\mu_A(X)  = \mu_A(Y) \approx 0.16,\quad \sigma_A^2(X)=\sigma_A^2(Y) \approx 0.37,\quad \cov_A(X,Y) \approx 0.06,
\]
from which we can deduce that $\cor_A(X,Y)\approx 0.16$. This shows that the conditional correlation could be lower than the unconditional one. In Figure~\ref{MCplotMultiNormal} we illustrate the consistency of the sample correlation estimator as well as its mean squared error. We can see that the sample estimate is close to the theoretical value (left panel) and the mean squared error converges to $0$ (middle panel). Also, we can see that the mean squared difference between  $\widehat\cor_A(X,Y)$ and $\overline\cor_A(X,Y)$ also converges to $0$ (right panel) which suggests that the statement of Theorem~\ref{th:normality_known_set} could be extended to the estimator with the random set $\hat{A}$. In fact, as expected, the simulation results confirm that the estimation error linked to $\hat{A}$-set approximation procedure plays minor role in the total simulation error. More specifically, the ratio of the error resulting from the estimation of $\hat{A}$ to the total error of  $\widehat{\cor}_A(X,Y)$ is $0.3$ for $n=200$ and decreases to $0.1$ for $n=2000$.

\begin{figure}[htp!]
\begin{center}
\includegraphics[width=0.32\textwidth]{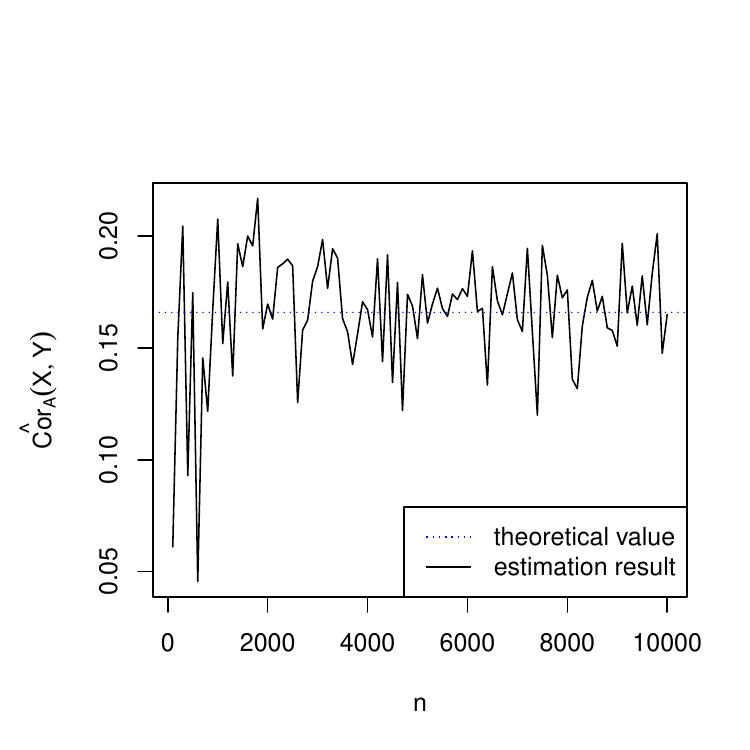}
\includegraphics[width=0.32\textwidth]{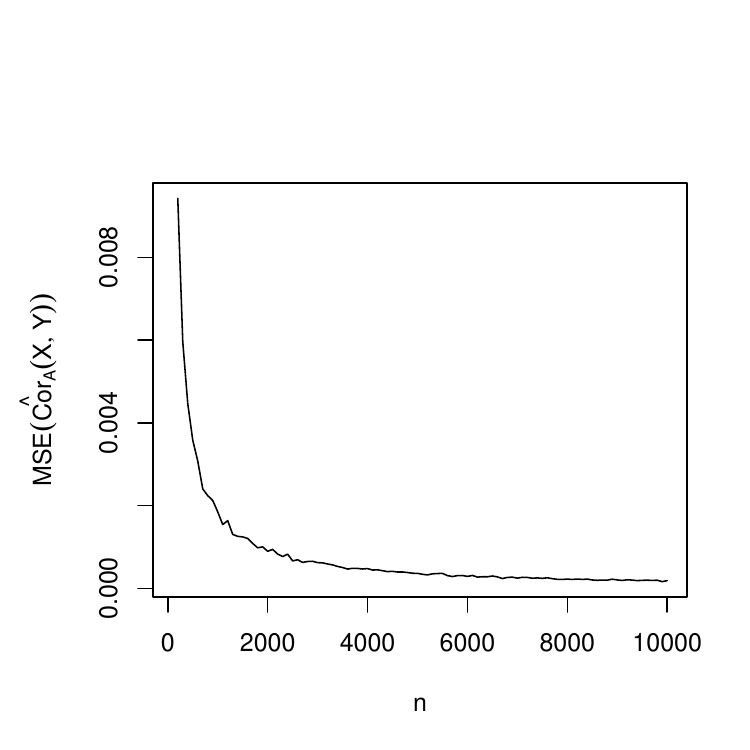}
\includegraphics[width=0.32\textwidth]{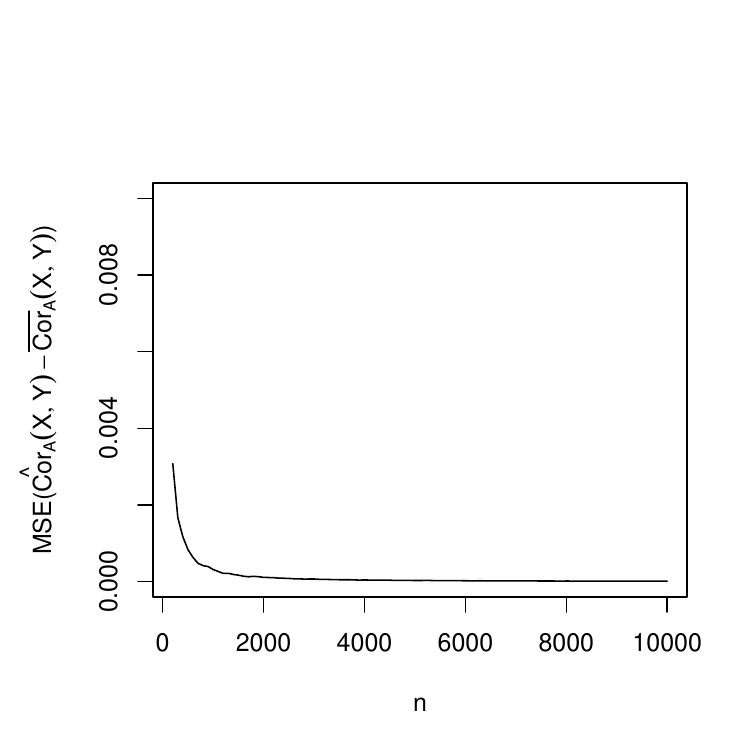}
\end{center}
\caption{The left panel presents a convergence of the estimated conditional correlation on a single trajectory (with the blue dotted line corresponding to the theoretical value), the middle panel shows the mean squared error (MSE) while the right panel shows the mean squared difference between $\widehat\cor_A(X,Y)$ and $\overline\cor_A(X,Y)$ for $A = [0.05,0.75]^2$. The simulations were based on $1000$ trajectories for each $n$ and the bivariate Gaussian random vector $(X,Y)$ discussed in Example~\ref{ex:MVN}. }\label{MCplotMultiNormal}
\end{figure}
\end{example}

\begin{example}[Bivariate $\alpha$-stable vector]\label{ex:MVS}
   Let $(X,Y)$ be a bivariate symmetric $\alpha$-stable distribution with the shape parameter $\alpha=1.5$, the spectral measure   
\begin{eqnarray}
 \Gamma=\frac{1}{4}\delta\left(\frac{\sqrt{2}}{2},\frac{\sqrt{2}}{2}\right) + \frac{1}{4}\delta\left(-\frac{\sqrt{2}}{2},-\frac{\sqrt{2}}{2}\right) +\frac{1}{4}\delta\left(-\frac{\sqrt{2}}{2},\frac{\sqrt{2}}{2}\right)+\frac{1}{4}\delta\left(\frac{\sqrt{2}}{2},-\frac{\sqrt{2}}{2}\right),
\end{eqnarray}
and the characteristic function
\begin{eqnarray}
\Phi(\theta):=\mathbb{E}\left(\exp\{i\theta_1X+i\theta_2Y\}\right)=\exp\left\{-\int_{S_2}|(\theta,s)|^{\alpha}\Gamma(ds_1,ds_2)\right\},\quad \theta=(\theta_1,\theta_2)\in \mathbb{R}^2,
\end{eqnarray}
where $\delta(a,b)$ is the Dirac delta measure at $(a,b)\in \mathbb{R}^2$ and $S_2$ is the unit sphere, see~\cite{SamTaq1994} for details. The random variables $X$ and $Y$ are univariate symmetric $\alpha-$stable distributed with the same distribution functions characterized by the stability parameter $\alpha$ and scale parameters  $\sigma_X^{\alpha}=\sigma_Y^{\alpha}=\int_{S_2}|s_1|^{\alpha}\Gamma(ds_1,ds_2)$; note that the unconditioned second moments are infinite. As in Example~\ref{ex:MVN}, let us fix $p_1=p_2=F_X(0.05)$ and $q_1=q_2=F_X(0.75)$ so that we get $A = [0.05,0.75] \times [0.05,0.75]$. As in the previous example, in Figure \ref{Plot:stable:1_5} we illustrate the consistency of the sample correlation estimator as well as its mean squared error. Even though the underlying distribution is not square-integrable, we can see that the conditional correlation estimator is consistent. As in Example~\ref{ex:MVN}, we can see that the mean squared error converges to $0$ as $n\to\infty$. Also, the ratio of the $\hat{A}$-set estimation error to the total estimation error is  $0.33$ for $n=200$ and $0.15$ for $n=2000$. This again suggests that the statement of Theorem~\ref{th:normality_known_set} could be extended to the estimated set $A$.
\begin{figure}[htp!]
\begin{center}
\includegraphics[width=0.32\textwidth]{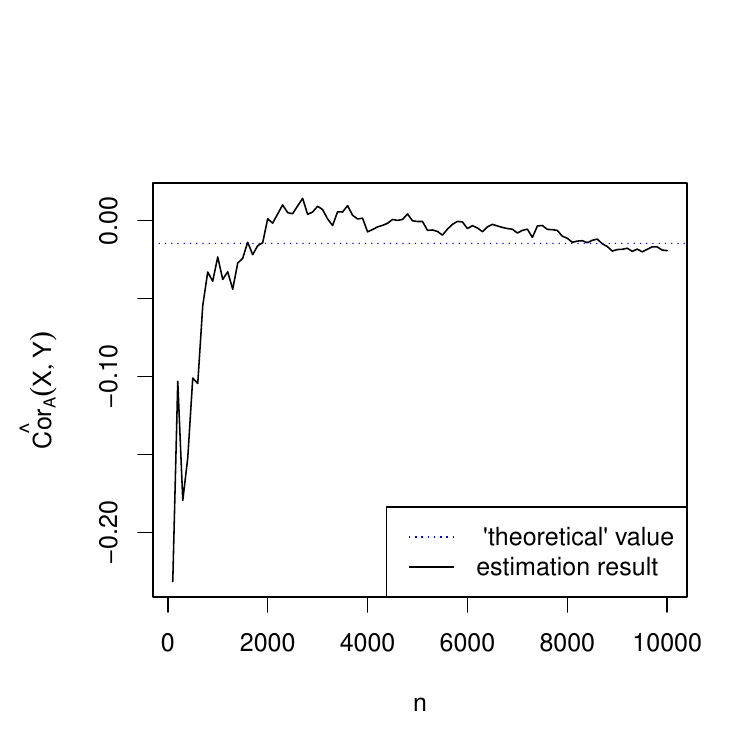}
\includegraphics[width=0.32\textwidth]{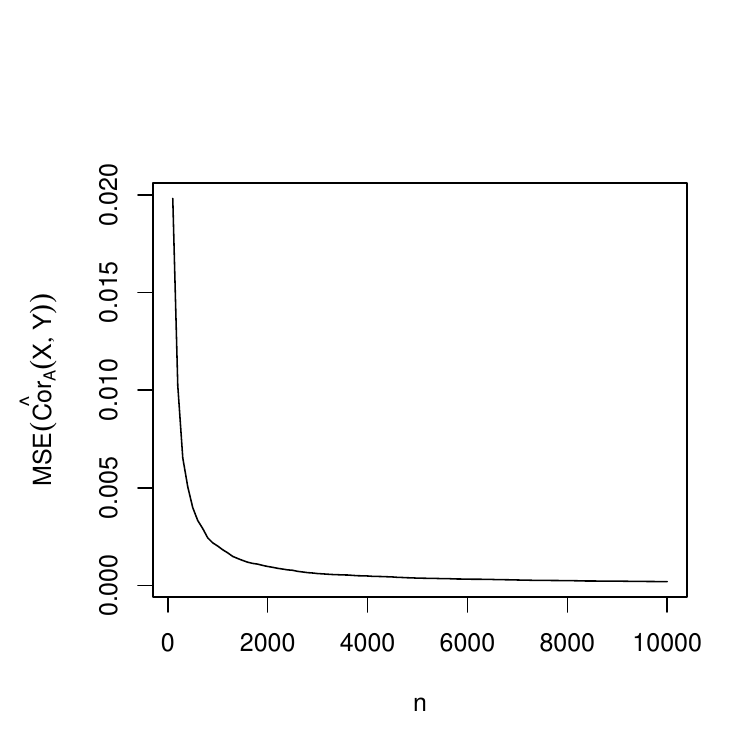}
\includegraphics[width=0.32\textwidth]{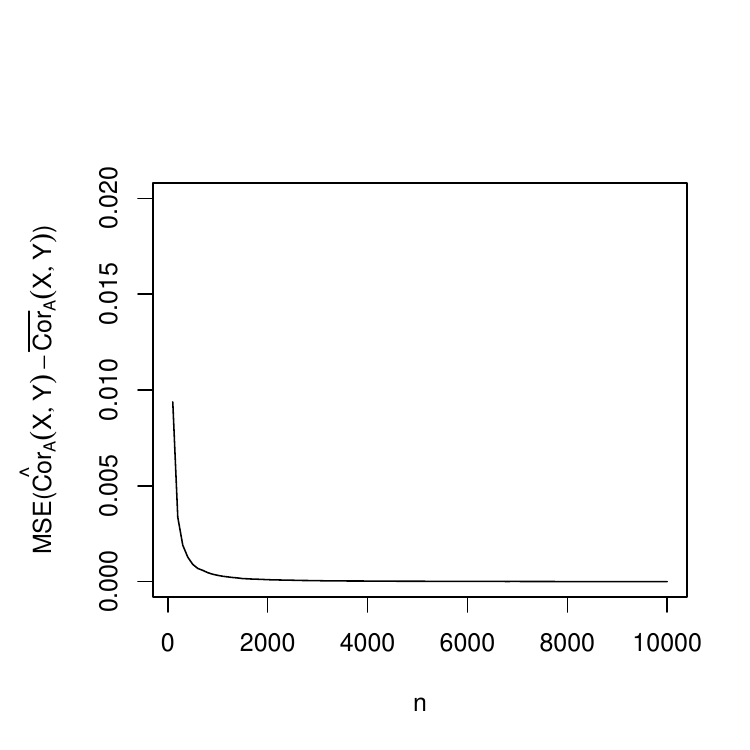}
\end{center}
\caption{
The left panel presents a convergence of the estimated conditional correlation on a single trajectory (with the blue dotted line corresponding to the value obtained on 50000-element sample), the middle panel shows the mean squared error (MSE) while the right panel shows the mean squared difference between $\widehat\cor_A(X,Y)$ and $\overline\cor_A(X,Y)$ for $A = [0.05,0.75]^2$. The simulations were based on $1000$ trajectories for each $n$ and the bivariate Gaussian random vector $(X,Y)$ discussed in Example~\ref{ex:MVS}.}\label{Plot:stable:1_5}
\end{figure}
\end{example}


\section{Conditional autocorrelation function and its estimator}\label{S:cond_autocor}

In this section, we show how to translate the methods introduced in Section~\ref{S:preliminaries} and Section~\ref{S:cond_cor_est} to the time series framework. We consider the serial dependence detection problem and introduce a  new tools which facilitate statistical testing of independence of time series observations. To do so, we define a conditional autocorrelation measure for a time series, discuss its estimator, and show some basic properties.

In the following, we assume that we are given a real valued time series $\{X_t\}_{t \in \mathbb{Z}}$ that is strictly stationary with one-dimensional cumulative distribution function $F$; we assume that $F$ is invertible with the inverse $Q$.
In this section, since we are considering a stationary time series, we simplify the definition of the set A from~\eqref{eq:set_A}. More specifically,  for any fixed quantile split $0<p<q<1$, with the slight abuse of notation, we define
\begin{equation}\label{eq:set_A_autocor}
    A:=[Q(p),Q(q)]\times [Q(p),Q(q)].
\end{equation}
Note that this effectively induces the same conditioning for both coordinates. Next, for an arbitrary $t\in \mathbb{Z}$ and $h\in \mathbb{N}$, we define the lag $h$ conditional autocorrelation by setting
\begin{equation}\label{eq:autocov_unit}
    \rho_A(h):=\cor_A(X_t,X_{t+h});
\end{equation}
directly from the stationarity of $\{X_t\}_{t \in \mathbb{Z}}$, we conclude that~\eqref{eq:autocov_unit} does not depend on $t\in \mathbb{Z}$. 
In the following, for the notational convenience, if we want to emphasise the dependence of $\rho_{A}(h)$ on the quantile split $0<p<q<1$, we  write
$\rho_{(p,q)}(h)$ for $\rho_{A}(h)$ corresponding to the conditioning set $A:=[Q(p),Q(q)]\times [Q(p),Q(q)]$.
Also, by analogy to Section~\ref{S:preliminaries},  we use $\rho(h)$ to denote the unconditional lag $h$ autocorrelation; this effectively corresponds to setting $p=0$ and $q=1$. We refer to the map $h\to \rho_A(h)$ as conditional autocorrelation fuction (CACF) and to its unconditional version $h\to \rho(h)$ simply as autocorrelation function (ACF).

By analogy to unconditional autocorrelation, the condition autocorrelation measures a linear dependence between the observations in a time series with indices lagged by $h$. However, the conditional version is well-defined even if the random variables are not square-integrable. Also, directly from Theorem~\ref{th:independence}, we get that if $\{X_t\}_{t \in \mathbb{Z}}$ are independent, then $\rho_A(h)=0$ for any choice of $A$. In fact, by using all positive lags and multidimensional conditioning, one could obtain a complete characterisation of independence; 
see Theorem 3.2 and Remark 5 from~\cite{JawJelPit2022} 
for details. 



Next, given $h\in\bN$, we discuss how to estimate~\eqref{eq:autocov_unit} using a {finite sample $\{X_t\}_{t=1}^{n}$}, where $n\in \mathbb{N}\setminus \{0\}$. To ease the notation, for $k=1, \ldots, n$, let $r_1(k)$ denote $k$th order statistic in the sample $\{X_t\}_{t=1}^{n}$, i.e. index corresponding to the $k$th smallest value. Similarly,  for $k=1, \ldots, n$, let $r_2(k)$ denote $k$th order statistic in the sample $\{X_t\}_{t=1+h}^{n+h}$; note that we assume {$r_2$} takes values in the set $\{1+h, \ldots, n+h\}$. Then, the sample version of the set $A$ from~\eqref{eq:set_A_autocor} can be defined by setting
\begin{equation}\label{eq:set_A.hat.auto}
\hat{A} :=[X_{r_1([np]+1)},X_{r_1([nq])}] \times [X_{r_2([np]+1)},X_{r_2([nq])}].
\end{equation}
Combining this definition with~\eqref{eq:cond_cor_est}, we define the conditional autocorrelation estimator by
\begin{equation}
     \hat\rho_A(h):=\widehat\cor_A((X_t)_{t=1}^n,(X_t)_{t=1+h}^{n+h})\label{eq:autocor_unit_est}
\end{equation}
together with the supplementary estimator
\begin{equation}
\bar\rho_A(h):=\overline\cor_A((X_t)_{t=1}^n,(X_t)_{t=1+h}^{n+h}),\label{eq:autocor_unit_est_known_set}
\end{equation}
in which the set $A$ is known and need not be estimated. As before, we refer to the map $h\to \hat\rho_A(h)$ as sample conditional autocorrelation function (sample CACF) and to its unconditional version $h\to \hat\rho(h)$ as sample autocorrelation function (sample ACF). Sometimes, with a slight abuse of notation, we drop the word {\it sample} and call $h\to \hat\rho_A(h)$ and $h\to \hat\rho(h)$ simply CACF and ACF, respectively.

Now, we translate the properties of  $\widehat\cor_A$ and $\overline\cor_A$ stated in Theorem~\ref{th:consistency} and Theorem~\ref{th:normality_known_set} to the time series framework.
It should be noted that in the following results' statements, we assume that the time series constituents $\{X_t\}_{t \in \mathbb{Z}}$ are independent, which corresponds to independence (null) hypothesis in testing for which test statistic distribution is typically derived. First, we show the consistency of the conditional autocorrelation.

\begin{theorem}\label{th:consistency_ts}
   For any $h\in \mathbb{N}$, let $\rho_A(h)$ and $\hat\rho_A(h)$ be given by~\eqref{eq:autocov_unit} and~\eqref{eq:autocor_unit_est}, respectively. Also, let $\{X_t\}_{t \in \mathbb{Z}}$ be an i.i.d. time series. Then, we get
    \[
    \hat\rho_A(h)\overset{\mathbb{P}}{\longrightarrow} \rho_A(h), \quad n\to \infty.
    \]
\end{theorem}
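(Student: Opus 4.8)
The plan is to reduce the statement to the bivariate i.i.d.\ result already proved in Theorem~\ref{th:consistency} and to replace the single ingredient that genuinely breaks in the serial setup. Writing $\mathbf{X}=(X_t)_{t=1}^n$ and $\mathbf{Y}=(X_t)_{t=1+h}^{n+h}$, the estimator $\hat\rho_A(h)$ of~\eqref{eq:autocor_unit_est} is nothing but $\widehat\cor_A(\mathbf{X},\mathbf{Y})$ evaluated on the sequence of overlapping pairs $\{(X_t,X_{t+h})\}_{t=1}^n$. Exactly as in Theorem~\ref{th:consistency}, by the continuous mapping theorem applied to the map in~\eqref{eq:cond_cor_est} it suffices to establish the three convergences $\widehat\cov_A(\mathbf{X},\mathbf{Y})\overset{\mathbb{P}}{\to}\cov_A(X_t,X_{t+h})$, $\hat\sigma^2_A(\mathbf{X})\overset{\mathbb{P}}{\to}\sigma^2_A(X_t)$ and $\hat\sigma^2_A(\mathbf{Y})\overset{\mathbb{P}}{\to}\sigma^2_A(X_{t+h})$; I would only spell out the covariance, the variances being handled identically. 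The essential difference with Theorem~\ref{th:consistency} is that this bivariate sample is \emph{not} i.i.d.: since the $X_t$ are i.i.d., two pairs are independent precisely when their time-index sets are disjoint, so pairs sharing an index overlap and the pair process is stationary and $h$-dependent.

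First I would reproduce the decomposition~\eqref{eq:th:consistency:0}--\eqref{eq:th:consistency:1}, splitting each empirical average over $\hat A$ into an average over the fixed set $A$ from~\eqref{eq:set_A_autocor} plus a residual supported on the symmetric difference of $A$ and $\hat A$. For the fixed-set part I would invoke the \emph{ergodic theorem} in place of the i.i.d.\ law of large numbers. An i.i.d.\ sequence is stationary and ergodic, and the pair process $\{(X_t,X_{t+h})\}_{t\in\mathbb{Z}}$ is a fixed measurable factor of the shift, hence itself stationary and ergodic; therefore, for any measurable $f$ with $\mathbb{E}|f(X_t,X_{t+h})|<\infty$, Birkhoff's theorem gives $\frac1n\sum_{t=1}^n f(X_t,X_{t+h})\overset{\mathbb{P}}{\to}\mathbb{E}[f(X_1,X_{1+h})]$. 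Applying this to $f\in\{1_A,\,x\,1_A,\,y\,1_A,\,xy\,1_A,\,x^2 1_A,\,y^2 1_A\}$ yields convergence of $\frac1n\sum_{t=1}^n 1_A(X_t,X_{t+h})\to\mathbb{P}[(X_t,X_{t+h})\in A]$, of $\frac1n\sum_{t=1}^n X_t X_{t+h}1_A(X_t,X_{t+h})\to\mathbb{E}[X_t X_{t+h}1_A]$, and of the remaining marginal averages. Integrability is automatic, since $1_A$ confines $(X_t,X_{t+h})$ to the bounded box $[Q(p),Q(q)]^2$ and hence bounds every integrand by $\max(|Q(p)|,|Q(q)|)^2$; this is exactly the feature that keeps the construction meaningful under heavy tails.

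Next I would control the residual as in~\eqref{eq:lm:conv_normalisation:1}, bounding $|\frac1n\sum_{t=1}^n X_t X_{t+h}(1_{\hat A}-1_A)|$ by $\frac{M_n}{n}$ times the number of indices whose marginal membership in the fixed and empirical quantile intervals disagree. These counts are governed by the empirical quantile estimators of the two margins; because $\{X_t\}_{t=1}^n$ and $\{X_t\}_{t=1+h}^{n+h}$ are each i.i.d.\ samples from $F$, the marginal sample quantiles are consistent and, after division by $n$, each count tends to $0$ in probability. The prefactor $M_n$, a maximum of order statistics near the $p$- and $q$-quantiles, converges in probability to the finite constant $\max(|Q(p)|,|Q(q)|)^2$, using that the fixed split satisfies $0<p<q<1$. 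Hence the residual vanishes in probability, and combining it with the ergodic-theorem limits gives $\widehat\cov_A(\mathbf{X},\mathbf{Y})\overset{\mathbb{P}}{\to}\cov_A(X_t,X_{t+h})$.

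The main obstacle, and the only substantive departure from Theorem~\ref{th:consistency}, is precisely the loss of independence among the overlapping pairs, which forces the ergodic (equivalently, $h$-dependent) law of large numbers in place of the classical one. Everything else carries over verbatim, because each margin remains i.i.d.\ and so the quantile and $M_n$ arguments are unchanged. Assembling the three convergences and applying the continuous mapping theorem yields $\hat\rho_A(h)\overset{\mathbb{P}}{\to}\rho_A(h)$; note that under the i.i.d.\ hypothesis $X_t$ and $X_{t+h}$ are independent for $h\ge 1$, so by Theorem~\ref{th:independence} the limit $\rho_A(h)$ equals $0$.
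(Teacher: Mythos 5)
Your proposal is correct and follows the same overall architecture as the paper's proof: reduce to the three moment convergences via the continuous mapping theorem, split each empirical average over $\hat A$ into a fixed-set part over $A$ plus a residual on the symmetric difference, kill the residual via consistency of the marginal sample quantiles and the boundedness of $M_n$, and treat the fixed-set part with a law of large numbers adapted to the dependent pair sequence. The single point where you genuinely diverge is that adapted law of large numbers. The paper handles the sum $\frac1n\sum_t X_tX_{t+1}1_A(X_t,X_{t+1})$ by interleaving: for $h=1$ it writes the sum as the average of the even-indexed and odd-indexed subsums, observes that $\{Z_{2i}\}$ and $\{Z_{2i-1}\}$ are each i.i.d.\ (their index sets are disjoint), and applies the classical law of large numbers twice; for general $h$ one would split into $h+1$ such subsequences. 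You instead invoke Birkhoff's ergodic theorem, noting that the pair process is a stationary ergodic factor of the i.i.d.\ shift and that integrability is automatic since $1_A$ confines everything to the bounded box $[Q(p),Q(q)]^2$. Both arguments are valid. The ergodic route is cleaner and more general (it would survive weakening the i.i.d.\ hypothesis to stationary ergodic, and it handles arbitrary $h$ with no extra bookkeeping), at the cost of a heavier tool; the paper's interleaving is elementary and self-contained but is spelled out only for $h=1$ and requires the explicit $h$-dependent decomposition in general. Your closing observation that $\rho_A(h)=0$ under the null by Theorem~\ref{th:independence} is consistent with the paper's framing of this result as the null-hypothesis benchmark for testing.
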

\begin{proof}
The proof will be shown only for $h=1$; the general case follows the same logic. Also, as in Theorem~\ref{th:consistency} we only prove
\[
    \frac{1}{n}\sum_{i=1}^n X_i X_{i+1} 1_{\hat{A}}(X_i,X_{i+1})\overset{\mathbb{P}}{\to} \mathbb{E}[X_iX_{i+1} 1_A(X_i,X_{i+1})],\quad n\to\infty;
\]
using a similar argument one can show the consistency of other empirical moments. In the following, to ease the notation, we set $A_1:=A_2:=[Q(p),Q(q)]$, $\hat{A}_1:=[X_{r_1([np]+1)},X_{r_1([nq])}]$, and $\hat{A}_2:=[X_{r_2([np]+1)},X_{r_2([nq])}]$. Next, note that, as in~\eqref{eq:th:consistency:1}, we have
\begin{align*}
    \frac{1}{n}\sum_{i=1}^n X_i X_{i+1} 1_{\hat{A}}(X_i,X_{i+1})= \frac{1}{n}\sum_{i=1}^n X_i X_{i+1} 1_{A}(X_i,X_{i+1}) + \frac{1}{n}\sum_{i=1}^n X_iX_{i+1} \left( 1_{\hat{A}_1}(X_i) 1_{\hat{A}_2}(X_{i+1}) - 1_{A_1}(X_i)1_{A_2}(X_{i+1})
    \right).
\end{align*}
Now, let us show that the first term converges to $\mathbb{E}[X_iX_{i+1} 1_A(X_i,X_{i+1})]$ while the second term converges to zero. First, note that the law of large numbers is not directly applicable to $\{X_i X_{i+1} 1_{A}(X_i,X_{i+1})\}_{i=1}^\infty$ as this sequence may not be independent. However, setting $Z_i := X_iX_{i+1}1_{A_1}(X_i)1_{A_2}(X_{i+1})$, $i\in \mathbb{N}\setminus \{0\}$ and $\lceil x\rceil:=\min\{n\in \mathbb{Z}\colon n\geq x\}$, $x\in \mathbb{R}$, we obtain
\begin{align}
\frac{1}{n}\sum_{i=1}^n X_i X_{i+1} 1_{A}(X_i,X_{i+1}) =\frac{1}{2}\left(\frac{2}{n} \sum_{i=1}^{[\frac{n}{2}]}Z_{2i} + \frac{2}{n} \sum_{i=1}^{\lceil\frac{n}{2}\rceil}Z_{2i-1} \right), \quad n\in \mathbb{N}\setminus \{0\},
\end{align}
and the sequences $\{Z_{2i}\}_{i=1}^\infty$ and $\{Z_{2i-1}\}_{i=1}^\infty$ are both i.i.d. Thus, using the law of large numbers twice, we get
\begin{align}
    \frac{1}{n}\sum_{i=1}^n X_i X_{i+1} 1_{A}(X_i,X_{i+1}) \overset{\mathbb{P}}{\to} \frac{1}{2} \mathbb{E}[Z_2] + \frac{1}{2} \mathbb{E}[Z_1] = \mathbb{E}\left[X_i X_{i+1} 1_{A}(X_i,X_{i+1}) \right], \quad n \to \infty.
\end{align}
Second, following the logic leading to~\eqref{eq:th:conv_normalisation:4}, we get
\[
\frac{1}{n}\sum_{i=1}^n X_iX_{i+1} \left( 1_{\hat{A}_1}(X_i) 1_{\hat{A}_2}(X_{i+1}) - 1_{A_1}(X_i)1_{A_2}(X_{i+1})
    \right)\overset{\mathbb{P}}{\to} 0, \quad n\to\infty, 
\]
which concludes the proof.
\end{proof}
Second, by analogy to Theorem~\ref{th:normality_known_set}, we show that the conditional autocorrelation with known set $A$ is asymptotically normal. 

\begin{theorem}\label{th:th:normality_known_set_ts}
   For any $h\in \mathbb{N}$, let $\rho_A(h)$ and $\bar\rho_A(h)$ be given by~\eqref{eq:autocov_unit} and~\eqref{eq:autocor_unit_est_known_set}, respectively. Also, let $\{X_t\}_{t \in \mathbb{Z}}$ be an i.i.d. time series. Then, for some $\tau>0$, we get
    \[
    \sqrt{n}\left(\bar\rho_A(1)-\rho_A(h)\right)\overset{d}{\longrightarrow} \mathcal{N}(0,\tau), \quad n\to \infty.
    \]
\end{theorem}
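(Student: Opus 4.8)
The plan is to follow the architecture of the proof of Theorem~\ref{th:normality_known_set}, writing the estimator as a smooth function of sample averages and applying the delta method, but replacing the ordinary multidimensional central limit theorem by a version valid for dependent sequences. For concreteness I would treat $h=1$, as in the proof of Theorem~\ref{th:consistency_ts}; the general lag follows verbatim with $X_{i+1}$ replaced by $X_{i+h}$. Exactly as before, I would set $W_i:=[1_A(X_i,X_{i+1}), X_i 1_A(X_i,X_{i+1}), X_{i+1} 1_A(X_i,X_{i+1}), X_i X_{i+1} 1_A(X_i,X_{i+1}), X_i^2 1_A(X_i,X_{i+1}), X_{i+1}^2 1_A(X_i,X_{i+1})]^T$ and observe that $\bar\rho_A(1)=g\left(\frac{1}{n}\sum_{i=1}^n W_i\right)$ for the same map $g$ as in Theorem~\ref{th:normality_known_set}.

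The crucial point of departure is that $(W_i)_{i\geq 1}$ is no longer i.i.d. Since $W_i$ is a fixed measurable function of the pair $(X_i,X_{i+1})$ and the $X_t$ are i.i.d., the sequence $(W_i)$ is strictly stationary and $1$-dependent: $W_i$ and $W_j$ are independent whenever $|i-j|\geq 2$ (for lag $h$ one gets $h$-dependence). The vectors $W_i$ are square-integrable because the conditioning set $A$ is bounded, so all second moments below are finite. I would therefore invoke a central limit theorem for stationary $m$-dependent sequences (the Hoeffding--Robbins theorem), which yields
\[
\sqrt{n}\left(\frac{1}{n}\sum_{i=1}^n W_i-\Theta\right)\overset{d}{\to}\mathcal{N}_6(0,\Sigma),\quad n\to\infty,
\]
with $\Theta=\mathbb{E}[W_1]$ and long-run covariance $\Sigma=\var(W_1)+\cov(W_1,W_2)+\cov(W_2,W_1)$; the lag-one cross term is the only structural difference from the i.i.d. computation in Theorem~\ref{th:normality_known_set}. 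If preferred, this step can be reduced to the scalar $m$-dependent central limit theorem through the Cram\'er--Wold device, applying it to each linear combination $\lambda^T W_i$.

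With the multivariate limit in hand, the conclusion is immediate from the delta method (Theorem 7 in~\cite{Ferguson1996}) applied to the continuously differentiable map $g$ at $\Theta$, exactly as in~\eqref{eq:th:norm_known_set:5}:
\[
\sqrt{n}\left(\bar\rho_A(1)-\rho_A(1)\right)=\sqrt{n}\left(g\left(\frac{1}{n}\sum_{i=1}^n W_i\right)-g(\Theta)\right)\overset{d}{\to}\mathcal{N}(0,\tau),\quad \tau:=\nabla g(\Theta)^T\Sigma\,\nabla g(\Theta).
\]

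I expect the main obstacle to be twofold. First, one must correctly identify and control the long-run covariance $\Sigma$: unlike the i.i.d. case, it carries the off-diagonal contribution $\cov(W_1,W_2)$ coming from the shared coordinate $X_{i+1}$, and verifying its finiteness together with the applicability of the $m$-dependent central limit theorem is where the dependence structure genuinely enters. Second, and more delicate, is establishing $\tau>0$: because $\Sigma$ is now a long-run covariance of a dependent sequence rather than a plain positive-definite covariance matrix, non-degeneracy is no longer automatic, and one must check that $\nabla g(\Theta)$ is not annihilated by $\Sigma$. Under the i.i.d. null this can be done by exhibiting positive-definiteness of $\Sigma$ (equivalently, ruling out a nontrivial linear relation among the components of $W_1$ and $W_2$), which holds as soon as the conditioned distribution on $A$ is non-degenerate.
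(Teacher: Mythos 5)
Your proposal is correct and follows essentially the same route as the paper: the paper likewise writes $\bar\rho_A(1)=g\bigl(\frac{1}{n}\sum_{i=1}^n W_i\bigr)$, observes that $\{W_i\}$ is $1$-dependent, applies the $m$-dependent central limit theorem (Theorem 6.4.2 of Brockwell and Davis) together with the Cram\'er--Wold device, and concludes via the delta method. Your explicit identification of the long-run covariance $\Sigma=\var(W_1)+\cov(W_1,W_2)+\cov(W_2,W_1)$ and your remark that $\tau>0$ requires a non-degeneracy check are in fact slightly more careful than the paper, which simply asserts that $\Sigma$ is positive definite.
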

\begin{proof}
As in the proof of Theorem~\ref{th:normality_known_set}, we start with noting that 
\[
\bar\rho_A(1)=g\left(\frac{1}{n}\sum_{i=1}^n W_i\right),
\]
where $g(t_1,t_2,t_3,t_4,t_5,t_6):=\frac{t_4/t_1-(t_2/t_1)(t_3/t_1)}{\sqrt{t_5/t_1-(t_2/t_1)^2}\sqrt{t_6/t_1-(t_3/t_1)^2}}$, $t_1\in \mathbb{R}>0$, $t_2,t_3,t_4, t_5, t_6\in \mathbb{R}$, and for $i\in \mathbb{N}\setminus \{0\}$ we have
\begin{align*}
    W_i&:=\begin{bmatrix}
        1_{{A}}(X_i,X_{i+1}), X_i 1_{{A}}(X_i,X_{i+1}), X_{i+1} 1_{{A}}(X_i,X_{i+1}), X_i X_{i+1} 1_{{A}}(X_i,X_{i+1}), X_i^21_{{A}}(X_i,X_{i+1}), X_{i+1}^21_{{A}}(X_i,X_{i+1})
    \end{bmatrix}^{T}.
\end{align*}
Next, note that $\{W_i\}_{i=1}^\infty$ is 1-dependent in the sense of Definition 6.4.3 in \cite{BroDav1991}. Thus, combining Theorem 6.4.2 therein with Cram\'{e}r-Wald device (see e.g. Theorem 29.4 in~\cite{Bil1995}),
we obtain
    \[
    \sqrt{n}\left( \frac{1}{n}  \sum_{i=1}^n W_i -\Theta\right)  \overset{d}{\to} \mathcal{N}_6(0,\Sigma), \quad n\to \infty,
    \]
    where $\Theta:=[
        \mathbb{P}[(X,Y)\in A], \mathbb{E}[X 1_{\{(X,Y)\in A\}}], \mathbb{E}[Y 1_{\{(X,Y)\in A\}}], \mathbb{E}[XY 1_{\{(X,Y)\in A\}}],\mathbb{E}[X^2 1_{\{(X,Y)\in A\}}],\mathbb{E}[Y^2 1_{\{(X,Y)\in A\}}]
    ]^T$
    and $\Sigma$ is some positive definite matrix. Thus, repeating the argument leading to~\eqref{eq:th:norm_known_set:5}, we conclude the proof.
\end{proof}
Theorem~\ref{th:consistency_ts} and Theorem~\ref{th:th:normality_known_set_ts} could be seen as the conditional versions of the standard results on the consistency and asymptotic normality of the unconditional autocorrelation; see e.g. Chapter 7 in~\cite{BroDav1991} for details. Note that, however, the classical results for the unconditional autocorrelation typically require square integrability of time series elements; see also~\cite{DavRes1986} for the autocorrelation convergence for non-integrable stable distributions. Due to the conditioning, in Theorem~\ref{th:consistency_ts} and Theorem~\ref{th:th:normality_known_set_ts} we do not need any integrability assumption.

The classical unconditional autocorrelation measure $\rho(h)$ is often used to construct the ACF, which links the lag $h$ with the (estimated) values of $\rho(h)$. This tool finds numerous applications in time series analysis; see e.g. Chapter 3 in~\cite{brockwell2016introduction} for the connection with the linear model's order identification. As already stated, one can also construct a conditional autocorrelation function, i.e. the map $h\mapsto \rho_A(h)$. In fact, in Example~\ref{ex:ACF} we discuss simple usage of its sample version $h\mapsto \hat\rho_A(h)$. Based on the example, one can see that the conditional ACF preserves the basic features of ACF while it is also able to provide some additional insight on the data correlation structure. Furthermore, by analogy to the conditional autocorrelation function, one can define a conditional partial autocorrelation function; see e.g. Section 3.2.3 in~\cite{brockwell2016introduction} for the discussion on its unconditional version. However, due to the technical complexity related to the two-level conditioning, this is left for future research.

\begin{example}[ACF and CACF for AR(1) and GARCH(1,1) processes]\label{ex:ACF}
    Let $\{Z_t\}_{t \in \mathbb{Z}}$ be an autoregressive process of order 1 (AR(1)) given by $Z_t := 0.5 Z_{t-1} + \varepsilon_t$, $t\in \mathbb{Z}$, where $\varepsilon_t$ is i.i.d. $\mathcal{N}(0,1)$. In Figure~\ref{plot:acfAR} we show the autocorrelation function and the conditional autocorrelation function for $p = 0.01$ and $q = 0.99$, estimated on a sample consisting of 250 elements. As expected, the conditional autocorrelation exhibits behaviour similar to that of its unconditional version. Now, let $\{Z_t\}_{t \in \mathbb{Z}}$ be GARCH(1,1) process with $\omega_0 = 0.01$, $\omega_1 = 0.6$ and $\omega_2 = 0.2$; see Equation~\eqref{eq:garch_model} for the explicit definition. In Figure~\ref{plot:acfGARCH} we show the autocorrelation function and the conditional autocorrelation function (with $p = 0.01$ and $q=0.65$) for 250-element sample from this model. It can be seen that, in contrast to the standard autocorrelation function, the conditional autocorrelation function can detect dependencies in the GARCH time series; see Section~\ref{S:GARCH} for a more detailed analysis.

    \begin{figure}[htp!]
\begin{center}
\includegraphics[width=0.33\textwidth]{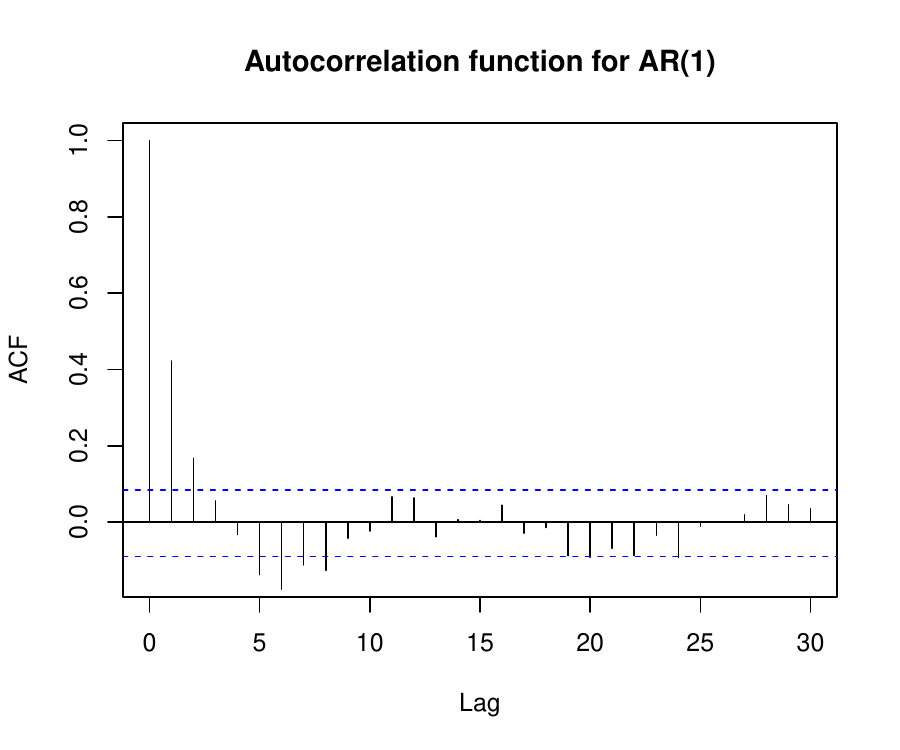}
\includegraphics[width=0.33\textwidth]{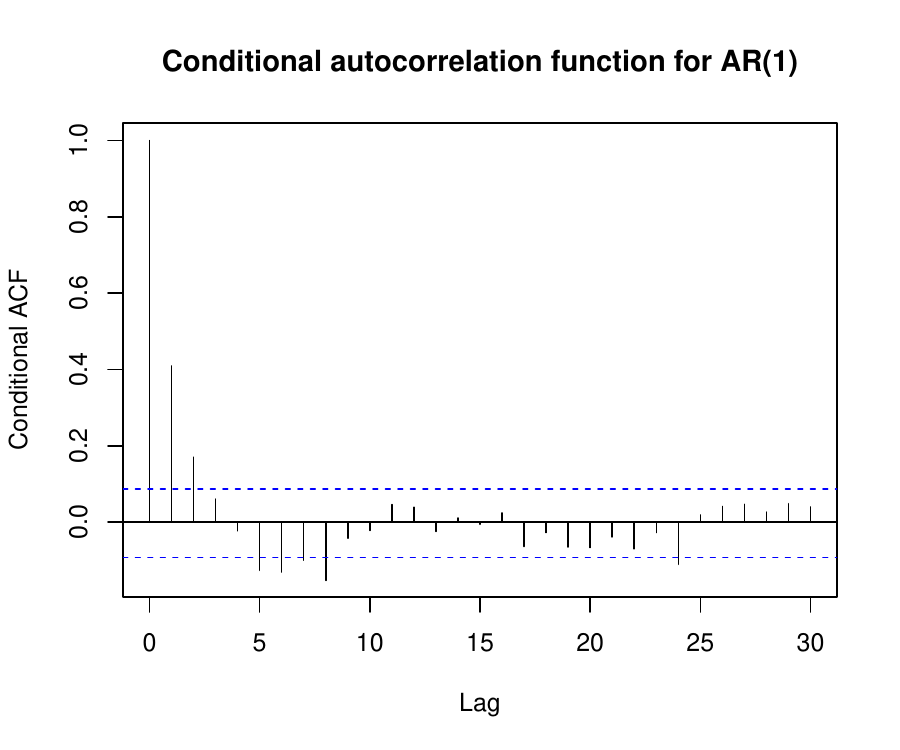} 
\end{center}
\caption{The conditional autocorrelation and the autocorrelation function for $AR(1)$ process witch $\varphi = 0.5$, and quantile splits $p = 0.01$, $q =0.99$. Blue dashed lines represent confidence intervals calculated based on Monte Carlo simulations i.e. $95\%$ confidence interval of the simulated results under the null hypothesis that the sample is from Gaussian white noise.}\label{plot:acfAR}
\end{figure}

\begin{figure}[htp!]
\begin{center}
\includegraphics[width=0.33\textwidth]{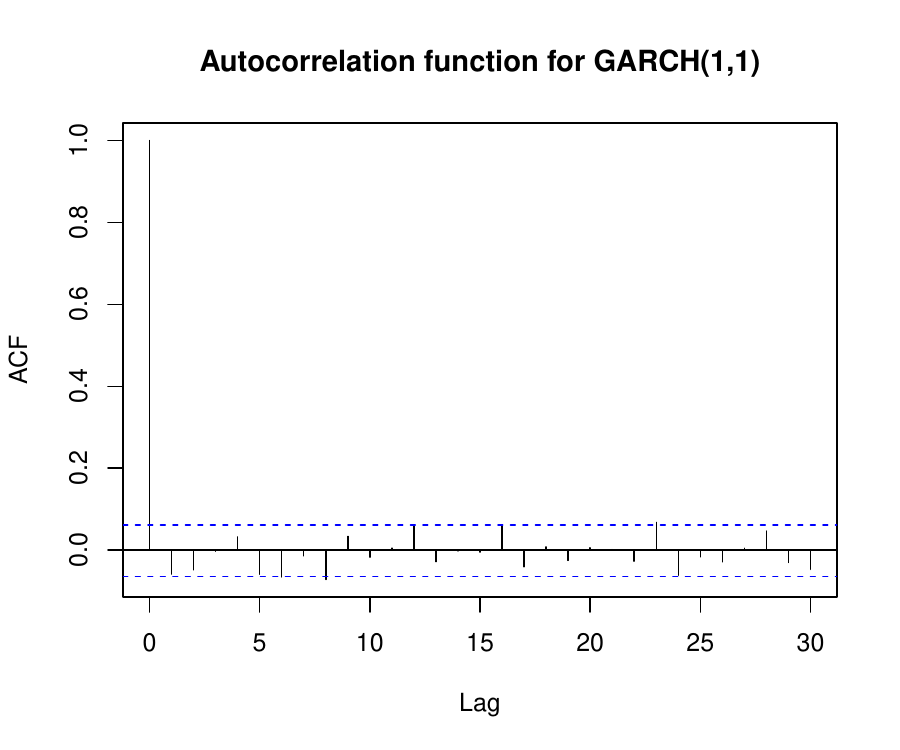}
\includegraphics[width=0.33\textwidth]{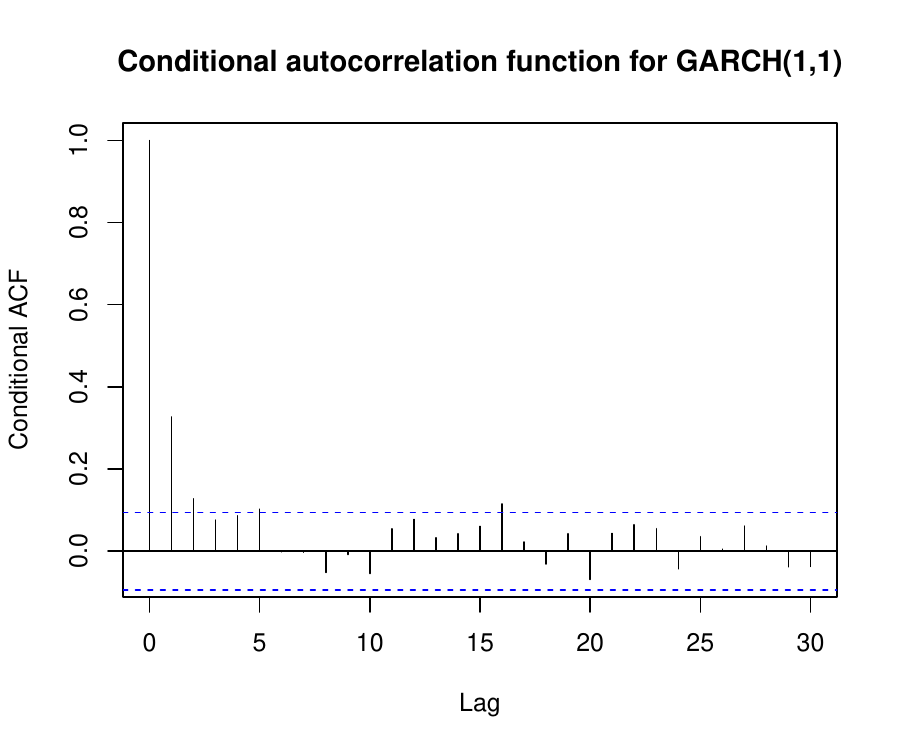}
\end{center}
\caption{The conditional autocorrelation and the autocorrelation function for $GARCH(1,1)$ process witch $\omega_1 = 0.01$, $\omega_2 = 0.6$, and $\omega_2 = 0.2$, and quantile splits $p = 0.01$, $q =0.65$. Blue dashed lines represent confidence intervals calculated based on Monte Carlo simulations i.e. $95\%$ confidence interval of the simulated results under the null hypothesis that the sample is from Gaussian white noise.}\label{plot:acfGARCH}
\end{figure}
\end{example}

\begin{remark}[Relation between conditioning and trimming procedures]
    Let us comment on the connection between~\eqref{eq:autocor_unit_est_known_set} and the robust autocorrelation estimation procedures considered in the literature; see e.g.~\cite{Dur2015} for a comprehensive overview. In particular, at first glance, the sample conditional autocorrelation could be seen as a version of the sample autocorrelation when a trimming procedure is applied; see ``Trim'' estimator therein. Note that, however, the trimmed estimator is intended to estimate the (unconditional) autocorrelation while Theorem~\ref{th:consistency_ts} highlights the fact that the estimator given in~\eqref{eq:autocor_unit_est_known_set} converges to the conditional autocorrelation which could be different from its unconditional version; a similar observation applies to some other robust autocorrelation estimators. Also, the proposed methodology facilitates the use of any quantile split $0<p<q<1$ while in the trimming procedure, the typical choice consists of $q=1-p$ and $p$ close to $0$. This flexibility is particularly useful when a non-linear dependence structure is considered; see Section~\ref{S:GARCH} for details.
\end{remark}


\section{Applications of conditional correlations to serial dependence detection}\label{s4_intro}

In this section, we illustrate how to use the conditional autocorrelation to statistically test the presence of serial dependence in time series data.  For simplicity, we focus on lag $h=1$ and consider CACF and ACF functions at this argument. We consider three settings. First, in Section~\ref{S:MA1}, we study a moving average model with external noise. Next, in Section~\ref{S:GARCH} we consider a noise-corrupted generalized autoregressive conditional heteroskedasticity model. Finally, in Section~\ref{S:financial_auto}, we study the empirical financial data. It should be noted that models with external noise are widely discussed in the literature for various {\it undistorted} models and different types of external noise see, e.g., \cite{esfandiari,diversi1,parma_ao0,parma_an1,Zul_nowa}. These models are particularly useful in industrial applications where recorded signals are often corrupted by noise stemming from disruptions in the measuring systems or other external factors influencing the observations. External noise can also naturally occur in financial data due to influences from other markets. Usually, the presence of such noise makes identifying autodependence a challenging task. The proposed approach allows for more efficient identification of dependencies in time series in the presence of additional noise.  

Before we proceed, let us describe a testing framework used in this section. For each model, we use the conditional autocorrelation with lag $h=1$ to statistically check if an underlying stationary time series $\{X_t\}_{t \in \mathbb{Z}}$ consists of independent observations. To do this, given a fixed quantile split $0<p<q<1$ and the corresponding set $A$ defined in~\eqref{eq:set_A_autocor}, we simulate the distribution of $\hat\rho_A(1)$ under the null independence hypothesis. This distribution is based on $N\in\bN$ independent Monte Carlo samples of size $n\in\bN$; each sample consists of i.i.d. random variables with a cumulative (known) distribution function $F$. 
Then, given a type I error rate $ \alpha_{I} \in (0,1)$, we define a two-sided rejection region $R=(-\infty,\hat{Q}_{\hat\rho_A(1)}(\alpha_{I} /2)] \cup [\hat{Q}_{\hat\rho_A(1)}(1-\alpha_I  /2),\infty)$, where $\hat{Q}_{\hat\rho_A(1)}$ is the empirical quantile function of $\hat\rho_A(1)$. Finally, we reject the independence hypothesis at the level $\alpha_{I}$ if the value of $\hat\rho_A(1)$ on a tested sample belongs to $R$. In this paper, if not stated otherwise, we set $\alpha_{I}=0.05$. Also, note that this procedure in its simplest form requires that the stationary distribution $F$ is known; in Section~\ref{S:financial_auto} we show how to extend our framework to the unknown $F$ setting based on the bootstrap procedure.

\begin{remark}[Quantile set selection]

The proposed testing method requires a suitable choice of the quantile split $0<p<q<1$. This makes the framework flexible enough to capture various forms of dependence in the models. In particular, if the data is corrupted by some symmetric heavy-tail noise, it is reasonable to set $p$ close to $0$ and $q=1-p$ as this effectively works as a robust noise filter; we refer to Section~\ref{S:MA1} for a detailed discussion. Next, if the data follows from the centred distribution with a non-linear dependence structure, it is advisable to consider an asymmetric set $A$; see Section~\ref{S:GARCH} for details.
\end{remark}


\subsection{Serial dependence in MA(1) model with external noise}\label{S:MA1}

In this section, we consider a moving average model with an additional external noise. The {\it undistorted} lag 1 moving average (MA(1)) process $\{Z_t\}_{t\in \mathbb{Z}}$ is defined by the formula
\begin{align}\label{eq:MA_def}
    Z_t := \frac{\theta}  {\sqrt{1+\theta^2}}\varepsilon_{t-1}+\frac{1}{\sqrt{1+\theta^2}} \varepsilon_t, \quad t\in \mathbb{Z},
\end{align} 
where $\theta\in \mathbb{R}$ and $\varepsilon_t \sim \mathcal{N}(0,1)$ is the standard Gaussian white noise process. Next, we externally add a noise (independent from $\{\varepsilon_t\}_{t\in \mathbb{Z}}$) to this process. In fact, we consider two types of external noise processes $\{\psi^i_t\}_{t\in\mathbb{Z}}$, $i=1,2$, referred as discrete-type and continuous-type external noise. More specifically, for $i=1,2$ and $t\in \mathbb{Z}$, the random variables $\psi^i_t$ are i.i.d. and satisfy
\begin{align}\label{eq:discretenoise}
   \psi^1_t =
  \begin{cases}
        r   & \quad \text{with probability } \frac{P}{2}, \\
    -r  & \quad \text{with probability } \frac{P}{2} ,\\
    0 & \quad \text{with probability } 1-P,
  \end{cases} \quad\quad\textrm{and}\quad\quad \psi^2_t\sim S(\alpha,c),
\end{align}
where $r>0$ is the symmetric jump size, $P\in (0,0.5)$ is the jump probability, and $S(\alpha,c)$ is a symmetric $\alpha$-stable distribution with shape $\alpha\in (0,2]$ and scale $c>0$; note that for the stable distribution we use {\it 0-parametrization} in the sense of Section 1.3 from \cite{Nolan2020}. Finally, for $i=1,2$, we define a distorted time series {$\{X^i_t\}_{t\in \mathbb{Z}}$} given by
\begin{eqnarray}\label{additive}
X^i_t:=Z_t+ \psi_t^i, \quad t\in \mathbb{Z}.
\end{eqnarray}
Using standard results one can see that this process is strictly stationary (with the one-dimensional stationary distribution given as a convolution of a standard normal distribution and the distribution of $\psi_1^i$), but the observations are not independent (for $\theta\neq 0$). In the following, we check if the conditional autocorrelations could detect this serial dependence and confront their performance with a standard autocorrelation measure. 

Before we proceed, let us discuss in more detail the rationale behind the construction of the processes $\{X^i_t\}_{t\in\mathbb{Z}}$, $i=1,2$. First, note that in~\eqref{eq:MA_def} we used a non-standard parameterization of MA(1) so that the variance of $Z_t$, for any $t\in \mathbb{Z}$, is equal to one. This is a purely technical transformation which facilitates an easier comparison with the relative magnitude of the external noise $\{\psi_t^i\}_{t\in \mathbb{Z}}$. Also, it should be noted that in this case, the null (independence) hypothesis corresponds to $\theta=0$. Indeed,  the process $\{Z_t\}_{t\in \mathbb{Z}}$ is i.i.d. if and only if $\theta=0$, and this characterisation transfers to  $\{X^i_t\}_{t\in\mathbb{Z}}$, $i=1,2$. Also, for any $\theta\in \mathbb{R}$, the stationary distributions of $\{Z_t\}_{t\in \mathbb{Z}}$ and $\{X^i_t\}_{t\in\mathbb{Z}}$, $i=1,2$, does not depend on $\theta$. Consequently, the one-dimensional stationary distributions of $\{X_t^i\}_{t \in \mathbb{Z}}$ under the null and the alternative hypothesis are the same. This facilitates dependence structure testing instead of distribution alignment testing.

To better visualise the impact of an external noise on MA(1) process, in Figure~\ref{MA1Out} we present an exemplary trajectory of the {\it pure} process $\{Z_t\}_{t\in \mathbb{Z}}$ (left panel) with the corresponding trajectories of the {\it distorted} processes $\{X_t^i\}_{t\in \mathbb{Z}}$, $i=1,2$ (middle and right panel, respectively). One can see that the process $\{X_t^1\}_{t\in \mathbb{Z}}$ preserves some characteristics of $\{Z_t\}_{t\in \mathbb{Z}}$ (although, there are some outliers) while the nature of $\{X_t^2\}_{t\in \mathbb{Z}}$ is visibly different from $\{Z_t\}_{t\in \mathbb{Z}}$.

\begin{figure}[htp!]
\begin{center}

\includegraphics[width=0.32\textwidth]{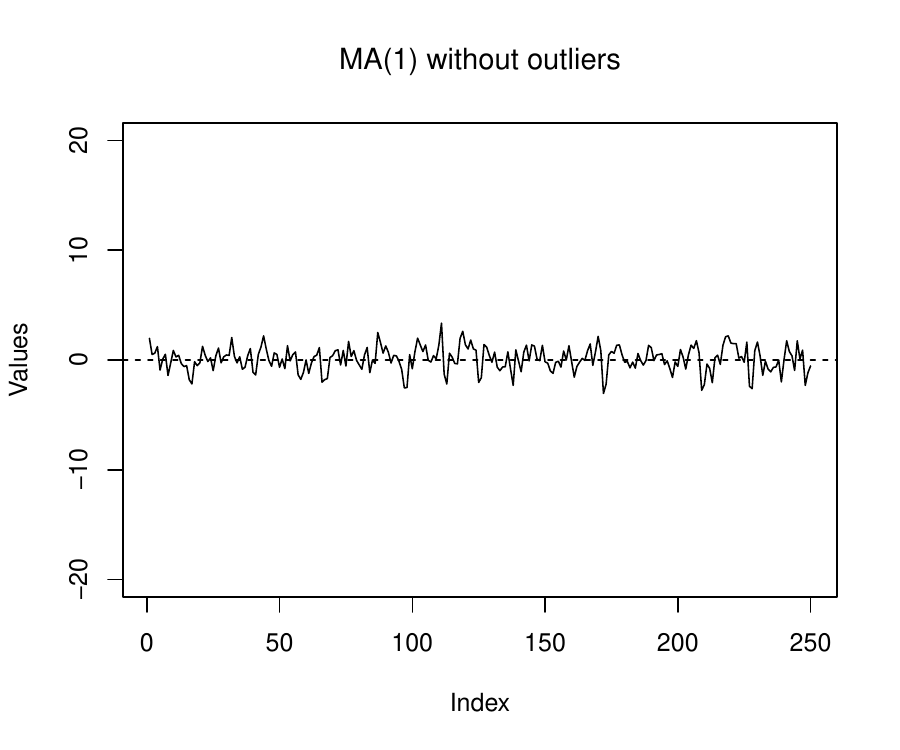}
\includegraphics[width=0.32\textwidth]{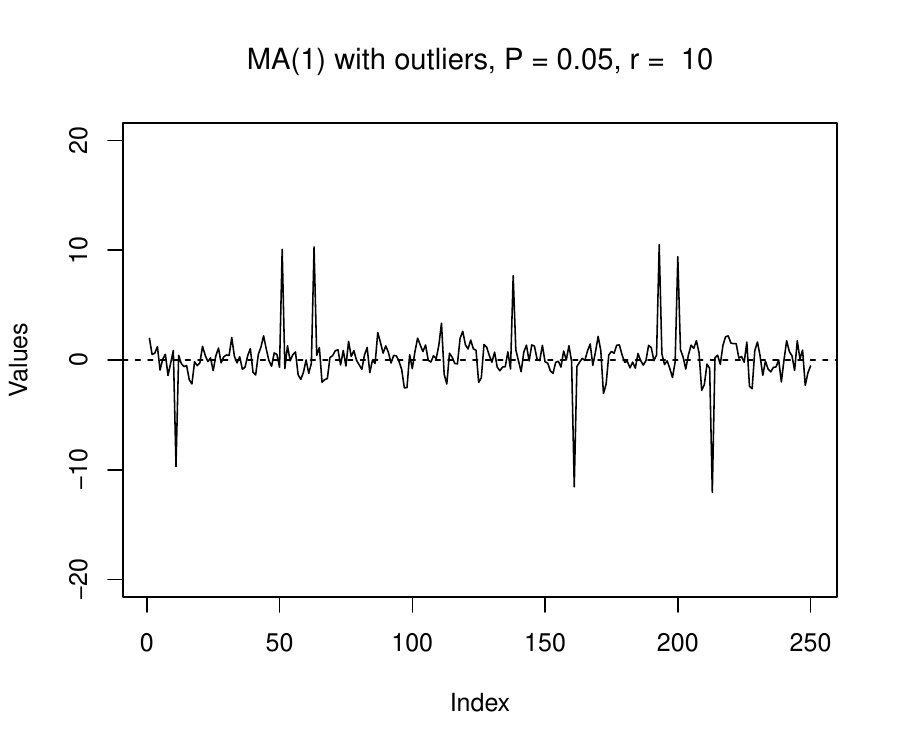}
\includegraphics[width=0.32\textwidth]{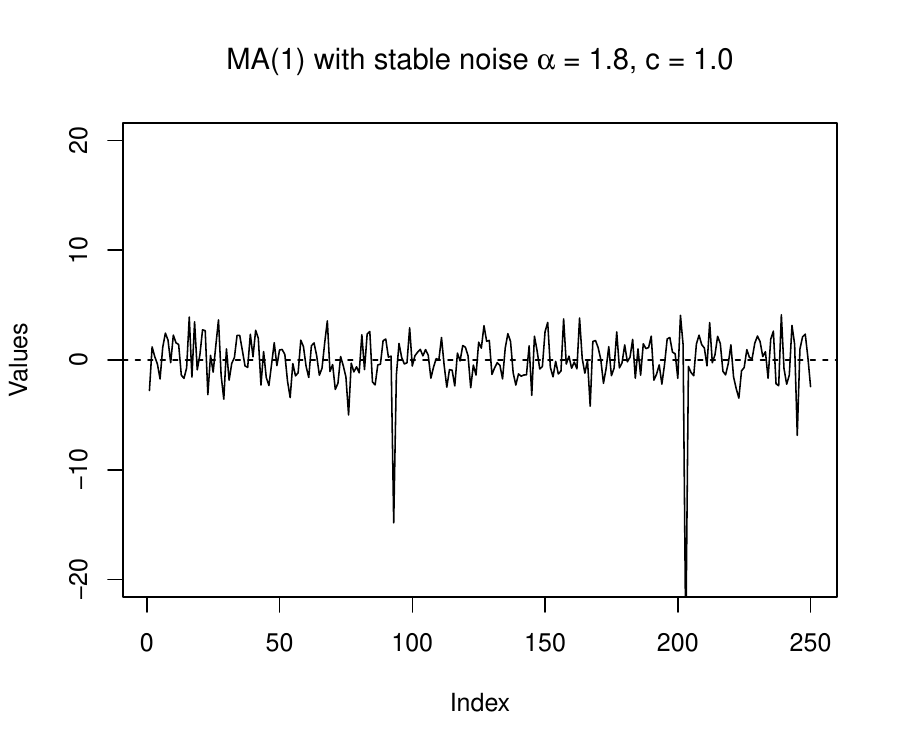}
\end{center}
\caption{The left panel shows an exemplary trajectory of MA(1) process with $\theta = 0.5$ while the middle and right panels show the corresponding trajectories of the processes $\{X_t^i\}_{t\in\mathbb{Z}}$ , for $i=1,2$, respectively. For   $\{\psi_t^1\}_{t\in\mathbb{Z}} $ we used the parameters $P = 0.05$ and $r = 10$, while for $\{\psi_t^2\}_{t\in\mathbb{Z}}$ we set $\alpha = 1.8$,  $c = 1.0$.}
\label{MA1Out}
\end{figure}

Let us now focus on assessing the proposed methodology for testing the serial dependence of the data. We will follow the testing framework described in Section~\ref{s4_intro} and use Monte Carlo simulations to estimate the power of the serial independence test based on conditional autocorrelation. In this section, we consider only symmetric quantile splits and set $q=1-p$ for some $p\in \{0.01, 0.1, 0.25\}$. The power of the tests based on $\rho_A(1)$ will be confronted with the standard autocorrelation $\rho(1)$. As already stated, the null hypothesis corresponds to $\theta=0$, i.e. independence of $\{X_t^i\}_{t\in \mathbb{Z}}$, while for the alternative hypothesis, we check various combinations of $\theta\in \{0.1,0.5, 0.9\}$, $P\in \{0.01, 0.15, 0.8\}$, $r\in \{1,8,15\}$, $\alpha\in \{1.05, 1.5, 2.0\}$, and $c\in \{0.1, 0.7, 1.5\}$. The distributions of the underlying test statistics for the null hypothesis are simulated based on $N=1000$ Monte Carlo samples with individual sample size $n=1000$. The power is estimated with the help of $1000$ samples from the alternative hypothesis with the type I error rate equal to 0.05.

The results are presented in Table~\ref{table:MA_jump}. 
\begin{table}[htp!]
\scalebox{0.75}{
\centering
\begin{tabular}{|ccc|ccc|c|}
  \hline
P & r & $\theta$ & $\rho_{(0.01,0.99)}(1)$ & $\rho_{(0.05,0.95)}(1)$ & $\rho_{(0.25,0.75)}(1)$ & $\rho(1)$ \\
  \hline
0.01 & 1.00 & 0.10 & 0.75 &  0.22 & 0.05 & \textbf{0.90} \\ 
0.01 & 1.00 & 0.50 & \textbf{1.00} & \textbf{1.00} & 0.17 & \textbf{1.00} \\ 
0.01 & 1.00 & 0.90 & \textbf{1.00} &  \textbf{1.00} & 0.26 & \textbf{1.00} \\ 
0.01 & 8.00 & 0.10 & \textbf{0.82} & 0.20 & 0.04&  0.52 \\ 
0.01 & 8.00 & 0.50 & \textbf{1.00} & \textbf{1.00} & 0.17&  \textbf{1.00} \\ 
0.01 & 8.00 & 0.90 & \textbf{1.00} & \textbf{1.00} & 0.27 & \textbf{1.00} \\ 
0.01 & 15.00 & 0.10 & \textbf{0.79} & 0.20 & 0.05 &0.12 \\ 
0.01 & 15.00 & 0.50 & \textbf{1.00} & \textbf{1.00} & 0.13& 0.95 \\ 
0.01 & 15.00 & 0.90 & \textbf{1.00} & \textbf{1.00} & 0.26& 0.99 \\ 
0.08 & 1.00 & 0.10 & 0.70 & 0.14 & 0.06 & \textbf{0.82} \\ 
0.08 & 1.00 & 0.50 & \textbf{1.00} & \textbf{1.00} & 0.18 & \textbf{1.00} \\ 
0.08 & 1.00 & 0.90 & \textbf{1.00} & \textbf{1.00}  & 0.26 & \textbf{1.00}\\ 
0.08 & 8.00 & 0.10 & 0.09 & \textbf{0.27} & 0.07& 0.09 \\ 
0.08 & 8.00 & 0.50 & 0.44 & 1.00 & 0.28 & \textbf{0.59} \\ 
0.08 & 8.00 & 0.90 & 0.59 & \textbf{1.00} & 0.46& 0.75 \\ 
0.08 & 15.00 & 0.10 & 0.06 & \textbf{0.26} & 0.07 & 0.06 \\ 
0.08 & 15.00 & 0.50 & 0.06 & \textbf{1.00} & 0.26& 0.11 \\ 
0.08 & 15.00 & 0.90 & 0.08 & 1.00 & 0.26&  0.13 \\ 
0.15 & 1.00 & 0.10 & 0.71 & 0.16 &0.06 & \textbf{0.83} \\ 
0.15 & 1.00 & 0.50 & \textbf{1.00} & 0.99 & 0.18 & \textbf{1.00} \\ 
0.15 & 1.00 & 0.90 & \textbf{1.00} & \textbf{1.00} & 0.28 & \textbf{1.00} \\ 
0.15 & 8.00 & 0.10 & 0.04 & \textbf{0.45} & 0.07 & 0.09 \\ 
0.15 & 8.00 & 0.50 & 0.09 & \textbf{1.00} & 0.34 & 0.20 \\ 
0.15 & 8.00 & 0.90 & 0.10 & \textbf{1.00} & 0.56 & 0.28 \\ 
0.15 & 15.00 & 0.10 & 0.05 & \textbf{0.44} & 0.06& 0.05 \\ 
0.15 & 15.00 & 0.50 & 0.05 & \textbf{1.00}& 0.32 & 0.05 \\ 
0.15 & 15.00 & 0.90 & 0.06 & \textbf{1.00} & 0.55&  0.09 \\ 
   \hline
\end{tabular}
\hfill
\centering
\,\,\,\,\,\,\,\,
\begin{tabular}{|ccc|ccc|c|}
  \hline
$\alpha$ & c & $\theta$ & $\rho_{(0.01,0.99)}(1)$ & $\rho_{(0.05,0.95)}(1)$ & $\rho_{(0.25,0.75)}(1)$ & $\rho(1)$ \\
  \hline
1.05 & 0.10 & 0.10 & \textbf{1.00}  & 0.92 & 0.08 &  0.63 \\
1.05 & 0.10 & 0.50 & \textbf{1.00} &  \textbf{1.00} & 0.10 & 0.76 \\
1.05 & 0.10 & 0.90 & \textbf{1.00} & \textbf{1.00} & 0.22 & 0.80 \\
1.05 & 0.70 & 0.10 & \textbf{0.99} & 0.92 & 0.06 &  0.21 \\
1.05 & 0.70 & 0.50 & \textbf{1.00} & \textbf{1.00} & 0.14 & 0.33 \\
1.05 & 0.70 & 0.90 & \textbf{1.00} & \textbf{1.00} & 0. 21 & 0.42 \\
1.05 & 1.50 & 0.10 & 0.71 & \textbf{0.86} & 0.06 & 0.07 \\
1.05 & 1.50 & 0.50 & 0.98 & \textbf{1.00} & 0.14 & 0.10 \\
1.05 & 1.50 & 0.90 & \textbf{1.00} & \textbf{1.00} & 0.21 & 0.14 \\
1.50 & 0.10 & 0.10 & \textbf{1.00} & 0.92 & 0.09 & 0.99 \\
1.50 & 0.10 & 0.50 & \textbf{1.00} & \textbf{1.00} & 0.10& 0.99 \\
1.50 & 0.10 & 0.90 & \textbf{1.00} & \textbf{1.00}& 0.22 & \textbf{1.00} \\
1.50 & 0.70 & 0.10 & \textbf{1.00} &  0.89 & 0.08 &  0.92 \\
1.50 & 0.70 & 0.50 & \textbf{1.00} & \textbf{1.00} & 0.11&  0.96 \\
1.50 & 0.70 & 0.90 & \textbf{1.00} & \textbf{1.00} & 0.19 & 0.98 \\
1.50 & 1.50 & 0.10 & \textbf{0.98} & 0.80 & 0.05 & 0.72 \\
1.50 & 1.50 & 0.50 & \textbf{1.00} & \textbf{1.00} & 0.12 & 0.88 \\
1.50 & 1.50 & 0.90 & \textbf{1.00} & \textbf{1.00} & 0.14 & 0.90 \\
2.00 & 0.10 & 0.10 & \textbf{1.00} & 0.91 & 0.07 & \textbf{1.00} \\
2.00 & 0.10 & 0.50 & \textbf{1.00} & \textbf{1.00} & 0.11 & \textbf{1.00} \\
2.00 & 0.10 & 0.90 & \textbf{1.00} &  \textbf{1.00} & 0.22 & \textbf{1.00} \\
2.00 & 0.70 & 0.10 & \textbf{1.00} & 0.93 & 0.06&  \textbf{1.00} \\
2.00 & 0.70 & 0.50 & \textbf{1.00} & \textbf{1.00} & 0.12 & \textbf{1.00} \\
2.00 & 0.70 & 0.90 & \textbf{1.00} & \textbf{1.00} & 0.18 & \textbf{1.00} \\
2.00 & 1.50 & 0.10 & 0.99& 0.82 & 0.08& \textbf{1.00} \\
2.00 & 1.50 & 0.50 & \textbf{1.00} &  \textbf{1.00} & 0.10 & \textbf{1.00} \\
2.00 & 1.50 & 0.90 & \textbf{1.00} &\textbf{1.00} & 0.16&  \textbf{1.00} \\
   \hline
\end{tabular}
\caption{The empirical power of the underlying tests for the MA(1) model with discrete noise (left panel) and the $\alpha-$stable noise (right panel) for multiple choices of parameters. }\label{table:MA_jump}
}
\end{table}
As expected, one can see that the power of all tests is increasing in $\theta$ as larger $\theta$ implies a stronger correlation in~\eqref{eq:MA_def}. For the discrete external noise $\{\psi_t^1\}_{t\in \mathbb{Z}}$, the power is also decreasing in the probability of outlier occurrence $P$ and in outlier size $r$. For the $\alpha-$stable external noise $\{\psi_t^2\}_{t\in \mathbb{Z}}$, the power is decreasing in the scale parameter $c$ and increasing in the stability parameter $\alpha$; note that the smaller $\alpha$, the fatter the tail of the distribution. Next, one can see that the serial dependence tests based on the conditional autocorrelation outperform  $\rho(1)$ (or obtain the same power) in almost all cases. The results are particularly striking for moderate values of $\alpha=1.5$ and $c=0.7$, where for any considered $\theta$, the empirical power of $\rho_{(0.01,0.99)}(1)$ equals 100\%, while the power of $\rho(1)$ is 54\%. However, one can notice that if the outliers are relatively rare (small $P$ or big $\alpha$) or have small magnitude (small $r$ or $c$), the power of $\rho(1)$ could be higher than the power of $\rho_A(1)$. This is expected as the outlier filtering mechanism embedded in the construction of $\rho_A(1)$ may lead to the loss of some information contained in the discarded observations. Still, one can account for this loss by choosing a smaller quantile split parameter $p$ (and bigger $q$).
Clearly, if the outliers are more common or with bigger magnitude, it is advisable to use bigger $p$ and smaller $q$ as this facilitates more efficient filtering; see e.g. $P=0.08$ and $r=15$, where the power of $\rho_{(0.1,0.9)}(1)$ is 4-10 times higher the power of $\rho_{(0.01,0.99)}(1)$ (and $\rho(1)$).

To better visualise the effects of the alternative distribution parameter choice on the test power, we focus on the conditional autocorrelation with $p=0.01$ and $q=0.99$. In Figure~\ref{plot:MA1Out} we present the results for the model with discrete noise $\{\psi_t^1\}_{t\in \mathbb{Z}}$ while in Figure~\ref{plot:carpet_stable} we show the power for the model with stable noise $\{\psi_t^2\}_{t\in \mathbb{Z}}$. In each case, we use $\theta=0.5$ in the construction of MA(1) model and compare the results with the serial dependence test based on the standard autocorrelation.

\begin{figure}[htp!]
\begin{center}
\includegraphics[width=0.28\textwidth]{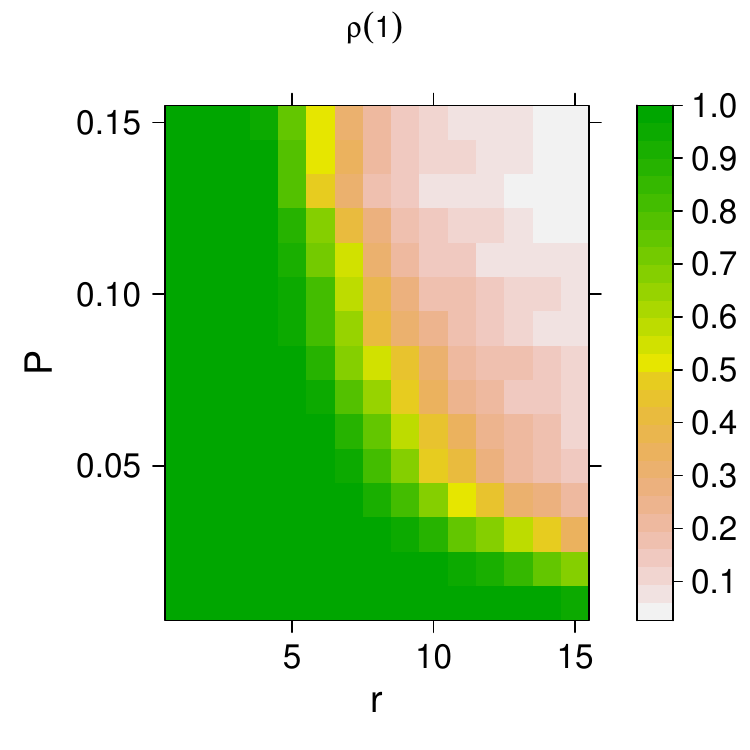}
\includegraphics[width=0.28\textwidth]{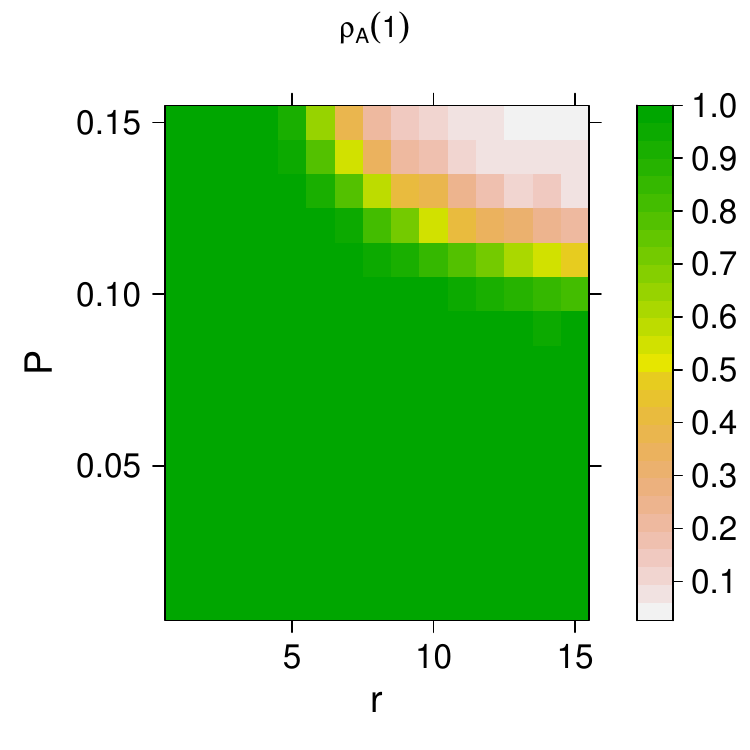}
\end{center}
\caption{Empirical power of the autocorrelation (left panel) and conditional autocorrelation (right panel) for the MA(1) model with $\theta=0.5$ and the discrete noise $\{\psi_t^1\}_{t\in\mathbb{Z}}$ with $r\in (0,15)$ and $P\in (0,0.15)$. The results are based on 1000 Monte Carlo samples with size $n=1000$. In the conditional autocorrelation we set $p=0.01$ and $q=0.99$.
}\label{plot:MA1Out}
\end{figure}
\begin{figure}[htp!]
\begin{center}
\includegraphics[width=0.28\textwidth]{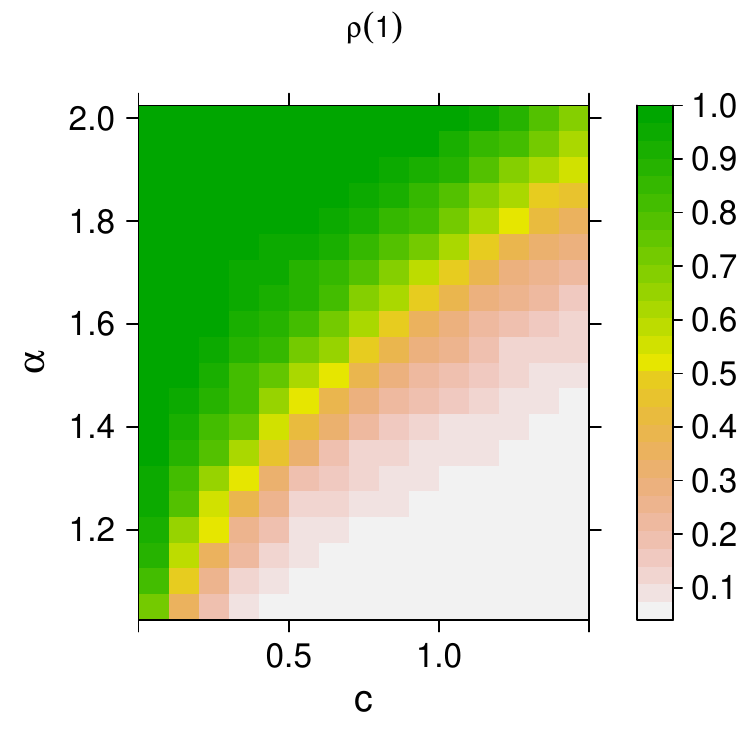}
\includegraphics[width=0.28\textwidth]{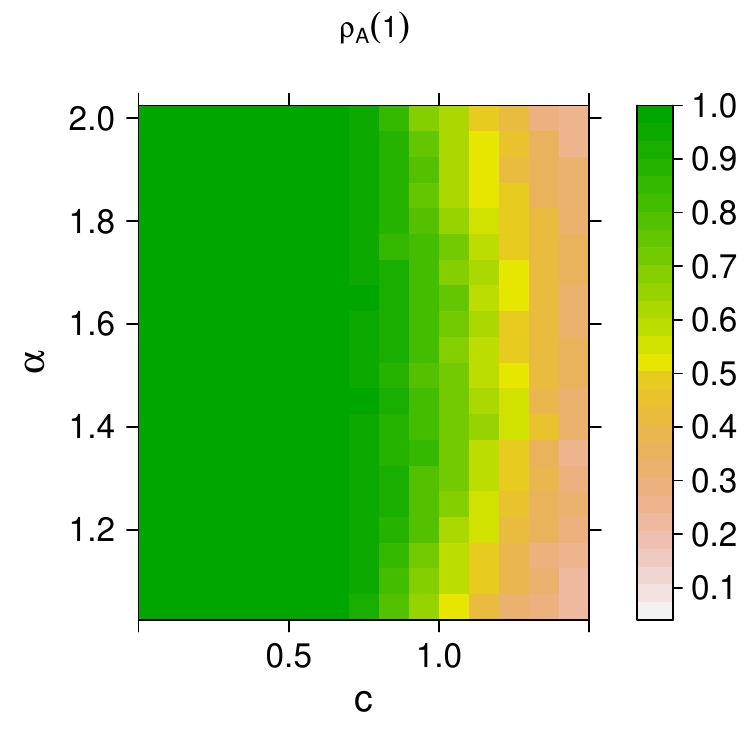}
\end{center}
\caption{
Empirical power of the autocorrelation (left panel) and conditional autocorrelation (right panel) for the MA(1) model with $\theta=0.5$ and the $\alpha$-stable noise $\{\psi_t^2\}_{t\in\mathbb{Z}}$ with $c\in (0.1,1.5)$ and $\alpha\in (1,2)$. The results are based on 1000 Monte Carlo samples with size $n=1000$. In the conditional autocorrelation we set $p=0.01$ and $q=0.99$.
}\label{plot:carpet_stable}
\end{figure}

From Figure~\ref{plot:MA1Out} one can see that that the test based on the conditional correlation outperforms the (unconditional) autocorrelation in almost all cases. In fact, even for relatively high jump sizes $r>10$ (and $P\in (0.05,0.1)$), the test based on $\rho(1)$ detects the serial dependence in 20-30\% of the samples while the test based on $\rho_A(1)$ is able to correctly identity the lack of independence in almost all cases. As already observed, the power decreases with the jump size $r$ and the jump probability $P$. Still, one can account for this effect by increasing the quantile split parameter $p$; see Table~\ref{table:MA_jump} for details. Indeed, with larger $p$, the conditional autocorrelation is able to more effectively filter out the noise which results in higher power.

Figure~\ref{plot:carpet_stable} confirms a good performance of the serial dependence test based on the conditional autocorrelation. In fact, it can be seen that while the power of the unconditional autocorrelation test drops significantly as the stability parameter $\alpha$ goes from 2 (Gaussian distribution) to 1 (Cauchy distribution), the results for the conditional correlation are almost unaffected by $\alpha$. For both tests, the bigger the value of the scale parameter $c$, the smaller the power. This could be linked to the fact that large $c$ corresponds to large values of the outliers which more severely corrupt the original MA(1) process. Finally, it should be noted that in the close to Gaussian case ($\alpha$ close to 2) and relatively big values of $c$ (e.g. $c>0.9$), the unconditional autocorrelation outperforms $\rho_A(1)$. This could be explained by the construction of the conditional autocorrelation statistic. Indeed, this test statistic focuses on some subset of data which facilitates the filtering of outliers (when they are present) but reduces the effective sample size. Still, for such a regular case ($\alpha$ close to 2), one can improve the performance by choosing $p$ close to 0 and $q$ close to 1; see Table~\ref{table:MA_jump} for details.

\FloatBarrier

\subsection{Serial dependence in GARCH(1,1)  model with external noise}\label{S:GARCH}

In this section, we consider a generalized autoregressive conditional heteroskedasticity (GARCH) model with external noise. The model construction follows the logic described in Section~\ref{S:MA1}. Here, the {\it undistorted} process $\{Z_t\}_{t\in \mathbb{Z}}$ is GARCH(1,1) process given by the formula
\begin{align}\label{eq:garch_model}
    \begin{cases}
\displaystyle         Z_t := \frac{\sigma_t }{\sqrt{\omega_0(1-\omega_1-\omega_2)^{-1}}}\varepsilon_t, \quad t\in \mathbb{Z},\\
\displaystyle          \sigma^2_t := \omega_0 + \omega_1 \varepsilon_{t-1}^2+\omega_2\sigma^2_{t-1}, \quad t\in \mathbb{Z},
    \end{cases}
\end{align}
where $\omega_i \geq 0$, for $i \in \{0,1,2\}$, $\omega_1 + \omega_2 < 1$, and $\varepsilon_t \sim \mathcal{N}(0,1)$ is the standard Gaussian white noise. As in the previous case, we consider two types of external noise processes given by~\eqref{eq:discretenoise}.The resulting processes $\{X^i_t\}_{t\in \mathbb{Z}}$, $i=1,2$, are defined as in~\eqref{additive}.  
Exemplary trajectories for the {\it pure} process $\{Z_t\}_{t\in \mathbb{Z}}$ and the corresponding processes $\{X^i_t\}_{t\in \mathbb{Z}}$ are presented in Figure~\ref{GARCHtajectory}.

\begin{figure}[htp!]
\begin{center}
\includegraphics[width=0.32\textwidth]{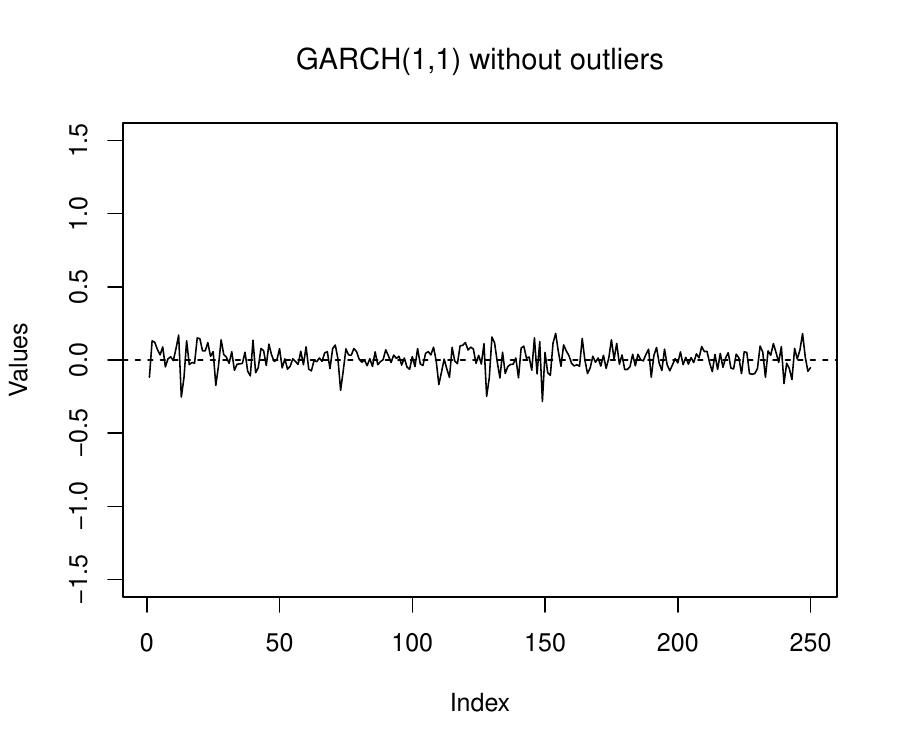}
\includegraphics[width=0.32\textwidth]{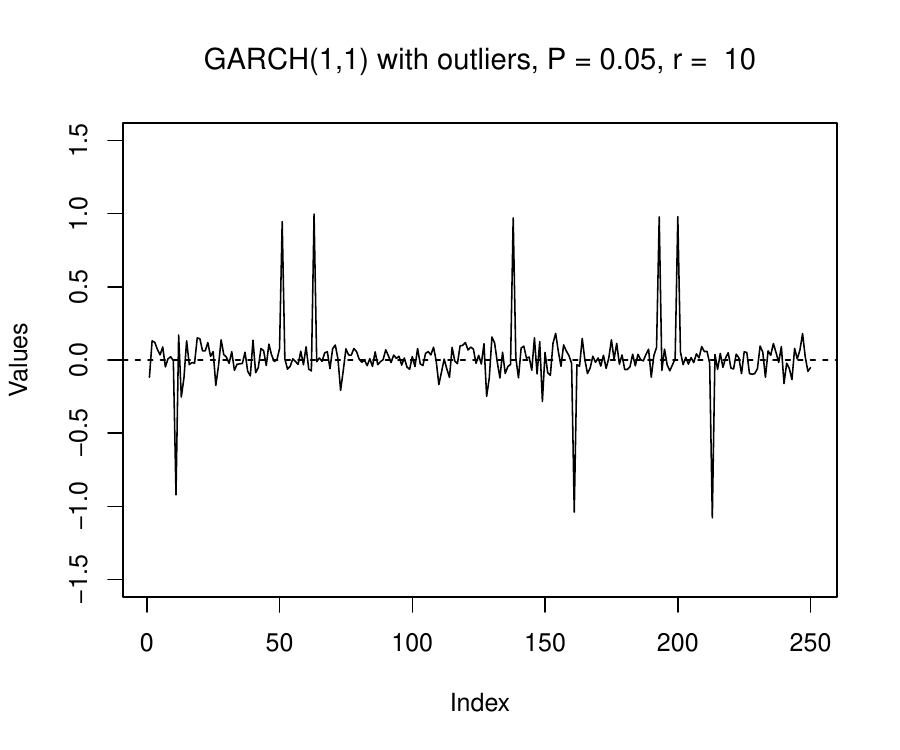}
\includegraphics[width=0.32\textwidth]{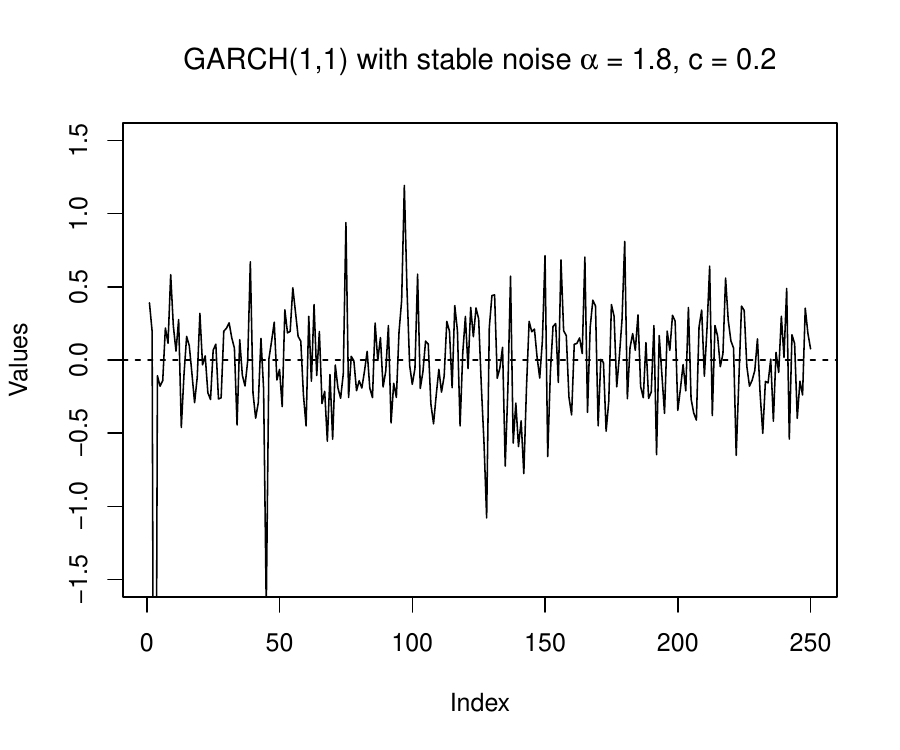}
\end{center}
\caption{The left panel shows an exemplary trajectory of GARCH(1,1) process with $\omega_0 = 0.001,
\omega_1 = 0.2$ and $\omega_2 = 0.6$ while the middle and right panels show the corresponding trajectories of the processes $\{X_t^i\}_{t\in\mathbb{Z}}$, for $i=1,2$, respectively. For   $\{\psi_t^1\}_{t\in\mathbb{Z}} $ we used the parameters $P = 0.05$ and $r = 10$, while for $\{\psi_t^2\}_{t\in\mathbb{Z}} $ we set $\alpha = 1.8$,  $c = 1.0$.}\label{GARCHtajectory}
\end{figure}

As already stated, the construction of the processes considered here is similar to the ones from Section~\ref{S:MA1}. In particular, the factor $\left(\omega_0(1-\omega_1-\omega_2)^{-1}\right)^{-1/2}$ in the definition of $Z_t$ results in the unit variance of the one-dimensional stationary distribution of $Z_t$; see a similar discussion for the parameterization of the MA(1) process defined in~\eqref{eq:MA_def}. Note that, however, the undistorted MA(1) process from Section~\ref{S:MA1} exhibits a linear dependence structure while the dependence in GARCH(1,1) process from~\eqref{eq:garch_model} is highly non-linear. This dependence structure makes serial dependence detection more challenging.

We use Monte Carlo simulations to assess the performance of the conditional autocorrelations. Since for GARCH process, the conditional autocorrelations with any $p\in (0,0.5)$ and $q=1-p$ are equal to zero, in this section, we consider nonsymmetric quantile splits, i.e. $\rho_{(p,q)}$ with $ p\in \{0.01, 0.05\}$ and $q \in \{0.5, 0.65, 0.9\} $. The power of the test based on $ \rho_A(1) $ is confronted with the standard autocorrelation $\rho(1) $ and with the autocorrelation calculated on the squared sample (denoted by $\rho(1)(x^2)$); note that the use of the last measure is a standard procedure for GARCH processes, see e.g. Chapter 7 in~\cite{brockwell2016introduction} for a more detailed discussion. As previously, the null hypothesis states that the underlying time series consists of i.i.d. observations. For the alternative hypothesis, we check various values of  $\omega _1 \in \{0.2,0.4, 0.6\}$. To reduce the number of parameters, we also set $\omega_2 = 0.9-\omega_1$ and $\omega_0 = 0.001$. For the external noise parameters, we take $P\in \{0.01, 0.08, 0.15\}$, $r\in \{1,8,15\}$, $\alpha\in \{1.05, 1.5, 2.0\}$, and $c\in \{0.1, 0.3, 0.\}$. The distributions of the underlying test statistics for the null (independence) hypothesis are simulated based on $N=1000$ Monte Carlo samples with individual sample size $n=1000$. Because of the lack of a closed-form formula for the univariate (unconditional) distribution of GARCH process, to simulate i.i.d. sample from the distribution of $Z_t$, we used the fact that random variables defined recursively by~\eqref{eq:garch_model} converge to the stationary distribution of GARCH process; see e.g.~\cite{Lin2009} for details. More specifically, we picked some initial values $\varepsilon_0$ and $\sigma^2_0$, simulated $\{\varepsilon_t\}_{t \in \mathbb{Z}}$ as i.i.d. $\mathcal{N}(0,1)$ random variables, and obtain one realisation of $Z_t$ by using~\eqref{eq:garch_model} recursively for $k=10000$ times; note that $k$ could be seen as a burn-in period parameter for this particular Markov Chain Monte Carlo simulation. This procedures was repeated $n\times N$ times to get $N$ samples with $n$ independent observations.
The power is estimated with the help of $1000$ samples from the alternative hypothesis with the type I error rate equal to $0.05$; each (dependent) sample is simulated directly from~\eqref{eq:garch_model}. 

The results are presented in Table~\ref{table:GARCH}.
\begin{table}[htp!]
\centering 
\scalebox{0.7}{
\begin{tabular}{|ccc|cccccc|cc|}
  \hline
 P & r & $w_1$ & $\rho_{(0.01,0.5)}(1)$ & $\rho_{(0.01,0.65)}(1)$ & $\rho_{(0.01,0.9)}(1)$ & $\rho_{(0.05,0.5)}(1)$ & $\rho_{(0.05,0.65)}(1)$ & $\rho_{(0.05,0.9)}(1)$ & $\rho(1)$ & $\rho(1)(x^2)$ \\ 
  \hline
0.01    & 1.00  & 0.20 & 0.68   &  0.69   & 0.24 & 0.22 & 0.20 & 0.05 & 0.13   & \textbf{0.94}   \\
0.01    & 1.00  & 0.40 & 0.97 & \textbf{1.00} & 0.62 & 0.58 & 0.64 & 0.09   & 0.26   & \textbf{1.00}   \\
0.01    & 1.00  & 0.60 & \textbf{1.00}  & \textbf{1.00} & 0.85 & 0.84 & 0.87 & 0.12 & 0.37   & \textbf{1.00}   \\
0.01    & 8.00  & 0.20 & \textbf{0.74} & 0.73 & 0.29 & 0.20& 0.19 & 0.05 & 0.07   & 0.04   \\
0.01    & 8.00  & 0.40 & 0.98 & \textbf{0.99} & 0.69 & 0.56 & 0.65 & 0.07  & 0.08   & 0.05   \\
0.01    & 8.00  & 0.60 & \textbf{1.00}  & \textbf{1.00} & 0.85 & 0.81 & 0.85 & 0.10 & 0.10   & 0.06   \\
0.01    & 15.00 & 0.20 & \textbf{0.73} & \textbf{0.73} & 0.30 & 0.19 & 0.19 & 0.04  & 0.08   & 0.04   \\
0.01    & 15.00 & 0.40 & 0.98  & \textbf{0.99} & 0.72 & 0.58 & 0.61 & 0.08 & 0.08   & 0.03   \\
0.01    & 15.00 & 0.60 & \textbf{1.00}  & \textbf{1.00} & 0.85 & 0.81 & 0.87 & 0.10 & 0.08   & 0.03   \\
0.08    & 1.00  & 0.20 & 0.27 & \textbf{0.42} & 0.14 & 0.34 & 0.41 & 0.07   & 0.07   & 0.41   \\
0.08    & 1.00  & 0.40 & 0.51  & 0.74 & 0.28 & 0.85 & \textbf{0.90} & 0.18 & 0.15   & 0.76   \\
0.08    & 1.00  & 0.60 & 0.57 & 0.82 & 0.40 & \textbf{0.98} & \textbf{0.98} & 0.31  & 0.21   & 0.85   \\
0.08    & 8.00  & 0.20 & 0.04  & 0.04 & 0.05 & \textbf{0.49} & 0.48 & 0.12 & 0.05   & 0.04   \\
0.08    & 8.00  & 0.40 & 0.04 & 0.04 & 0.06 & \textbf{0.91} & \textbf{0.91} & 0.27  & 0.06   & 0.05   \\
0.08    & 8.00  & 0.60 & 0.04  & 0.04 & 0.06 & \textbf{0.97} & \textbf{0.97} & 0.42 & 0.06   & 0.06   \\
0.08    & 15.00 & 0.20 & 0.03  & 0.03 & 0.04 & 0.47 & \textbf{0.48} & 0.13 & 0.04   & 0.04   \\
0.08    & 15.00 & 0.40 & 0.02 & 0.02 & 0.03 & \textbf{0.90} & 0.89 & 0.29  & 0.04   & 0.05   \\
0.08    & 15.00 & 0.60 & 0.04 & 0.03 & 0.03 & \textbf{0.96} & \textbf{0.96} & 0.43   & 0.04   & 0.07   \\
0.15    & 1.00  & 0.20 & 0.12 & 0.19 & 0.06 & 0.38 & \textbf{0.47} & 0.08  & 0.05   & 0.23   \\
0.15    & 1.00  & 0.40 & 0.24 & 0.44 & 0.09 & 0.70 & \textbf{0.82} & 0.14   & 0.13   & 0.60   \\
0.15    & 1.00  & 0.60 & 0.24 & 0.49 & 0.17 & 0.78 & \textbf{0.93} & 0.23  & 0.17   & 0.71   \\
0.15    & 8.00  & 0.20 & 0.05 & 0.06 & 0.05 & 0.06 & \textbf{0.10} & 0.07   & 0.04   & 0.04   \\
0.15    & 8.00  & 0.40 & 0.04  & 0.05 & 0.05 & 0.08 & \textbf{0.14} & 0.07 & 0.04   & 0.04   \\
0.15    & 8.00  & 0.60 & 0.05 & 0.04 & 0.04 & 0.09 & \textbf{0.16} & 0.09  & 0.04   & 0.04   \\
0.15    & 15.00 & 0.20 & 0.05 & 0.05 & 0.05 & 0.06 & \textbf{0.07} & 0.06  & 0.04   & 0.05   \\
0.15    & 15.00 & 0.40 & 0.04  & 0.05 & 0.04 & 0.06 & \textbf{0.09} & 0.07 & 0.04   & 0.05   \\
0.15    & 15.00 & 0.60 & 0.03 & 0.04 & 0.04 & 0.07 & \textbf{0.09} & 0.07  & 0.05   & 0.06   \\  \hline
\end{tabular}}

\vspace{0.2cm}

\scalebox{0.7}{
\begin{tabular}{|ccc|cccccc|cc|}
  \hline
 $\alpha$ & c & $w_1$ & $\rho_{(0.01,0.5)}(1)$ & $\rho_{(0.01,0.65)}(1)$ & $\rho_{(0.01,0.9)}(1)$ & $\rho_{(0.05,0.5)}(1)$ & $\rho_{(0.05,0.7)}(1)$ & $\rho_{(0.05,0.9)}(1)$ & $\rho(1)$ & $\rho(1)(x^2)$ \\ 
  \hline
1.05  & 0.10 & 0.20 & 0.49 & \textbf{0.50} & 0.20 & 0.15 & 0.16 & 0.05  & 0.05   & 0.04  \\
1.05  & 0.10 & 0.40 & 0.88 & \textbf{0.89}& 0.42 & 0.40 & 0.41 & 0.06  & 0.09   & 0.05  \\
1.05  & 0.10 & 0.10 & 0.96 & \textbf{0.97} & 0.60 & 0.64 & 0.67 & 0.08 & 0.11  &  0.09  \\
1.05  & 0.30 & 0.20 & \textbf{0.15}  & 0.12 & 0.09 & 0.13 & \textbf{0.15} & 0.07  & 0.04   & 0.05  \\
1.05  & 0.30 & 0.40 & 0.30 & \textbf{0.38} & 0.22 & 0.30 & 0.29 & 0.08  & 0.04   & 0.04  \\
1.05  & 0.30 & 0.60 & 0.43 & \textbf{0.51} & 0.30 & 0.45 & 0.44 & 0.07  & 0.05   & 0.06  \\
1.05  & 0.50 & 0.20 & 0.06 & \textbf{0.08} & 0.06 & 0.04 & 0.06 & 0.07  & 0.06   & 0.05  \\
1.05  & 0.50 & 0.40 & 0.11  & 0.12 & 0.08 & 0.14 & \textbf{0.23} & 0.07 & 0.04   & 0.04  \\
1.05  & 0.50 & 0.60 & 0.12 & 0.17  & 0.11 & 0.15 & \textbf{0.31} & 0.10  & 0.04   & 0.04  \\
1.50  & 0.10 & 0.20 & 0.40  & 0.43 & 0.14 & 0.10 & 0.14 & 0.04 & 0.08   & \textbf{0.58}  \\
1.50  & 0.10 & 0.40 & 0.86  & \textbf{0.91} & 0.34 & 0.28& 0.13 & 0.06 & 0.19   & 0.71  \\
1.50  & 0.10 & 0.60 & 0.98  & \textbf{0.99} & 0.55 & 0.49& 0.62 & 0.06 & 0.31   & 0.77  \\
1.50  & 0.30 & 0.20 & 0.33  & \textbf{0.34} & 0.13 & 0.14 & 0.13 & 0.06 & 0.07   & 0.06  \\
1.50  & 0.30 & 0.40 & 0.67 & \textbf{0.72} & 0.26 & 0.21 & 0.22 & 0.08  & 0.11   & 0.21  \\
1.50  & 0.30 & 0.60 & 0.84 & \textbf{0.87} & 0.43 & 0.28 & 0.27 & 0.09  & 0.16   & 0.26  \\
1.50  & 0.50 & 0.20 & \textbf{0.18} & \textbf{0.18} & 0.10& 0.10 & 0.08 & 0.04  & 0.06   & 0.02  \\
1.50  & 0.50 & 0.40 & \textbf{0.39} & 0.38 & 0.19 & 0.16 & 0.13 & 0.08   & 0.09  & 0.06  \\
1.50  & 0.50 & 0.60 & 0.47 & \textbf{0.53} & 0.25 & 0.17 & 0.17 & 0.07   & 0.10   & 0.10  \\
2.00  & 0.10 & 0.20 & 0.47& 0.43 & 0.09 & 0.09 & 0.14 & 0.03   & 0.09   & \textbf{1.00}  \\
2.00  & 0.10 & 0.40 & 0.92 & 0.92 & 0.38 & 0.37 & 0.43 & 0.05   & 0.22   & \textbf{1.00}  \\
2.00  & 0.10 & 0.60 & 0.99 & \textbf{1.00}  & 0.61 & 0.56 & 0.65 & 0.08  & 0.34   & \textbf{1.00}  \\
2.00  & 0.30 & 0.20 & 0.29 & 0.30 & 0.09 & 0.14 & 0.10 & 0.05  & 0.07   & \textbf{0.98}  \\
2.00  & 0.30 & 0.40 & 0.66& 0.74 & 0.22 & 0.16 & 0.18 & 0.05   & 0.20   & \textbf{1.00}  \\
2.00  & 0.30 & 0.60 & 0.82 & 0.88 & 0.34 & 0.28 & 0.26 & 0.05   & 0.30   & \textbf{1.00}  \\
2.00  & 0.50 & 0.20 & 0.13 & 0.18 & 0.08 & 0.08 & 0.08 & 0.04   & 0.06   & \textbf{0.82}  \\
2.00  & 0.50 & 0.40 & 0.34 & 0.37 & 0.13 & 0.12 & 0.11 & 0.04   & 0.16   & \textbf{0.99}  \\
2.00  & 0.50 & 0.60 & 0.41 & 0.43 & 0.18 & 0.10 & 0.12 & 0.04   & 0.23   & \textbf{1.00}  \\ \hline
\end{tabular}
}
\caption{The empirical power of the underlying tests for the GARCH(1,1) model with discrete noise (top panel) and the stable noise (bottom panel) for multiple choices of parameters.}\label{table:GARCH}
\end{table}
Up to the simulation error, for each test, the power increases with $\omega_1$. Also, as in Section~\ref{S:MA1}, for the discrete external noise $\{\psi_t^1\}_{t\in \mathbb{Z}}$, the power decreases with respect to $P$ and $r$ while for the $\alpha-$stable external noise the power decreases with $c$ and increases with $\alpha$. Next, it can be noted that the serial dependence tests based on conditional autocorrelation outperform $\rho(1)$ in almost all cases. Also, these methods usually yield comparable or better results than $\rho(1)(x^2)$. In particular, for $\alpha = 1.5$, $c=0.1$, and $\omega_1 = 0.6$, the empirical power of 
$\rho_{(0.01,0.5)}(1)$ equals 80\%, while the power of $\rho(1)$ and $\rho(1)(x^2)$ are 16\% and 23\%, respectively. However, it is worth noting that if outliers are relatively rare (small $P$ or large $\alpha$) or have small magnitudes (small $r$ or $c$), the power of $\rho(1)(x^2)$ could surpass that of $\rho_A(1)$. This is also expected as the outlier filtering mechanism embedded in the construction of $\rho_A(1)$, similar to the MA example, may lead to the loss of information contained in the discarded observations. Nevertheless, this loss can be mitigated by selecting a smaller quantile split parameter $p$ and a larger $q$. Clearly, if outliers are more common or have greater magnitudes, it is advisable to use larger $p$ and smaller $q$, as this facilitates more efficient filtering; for instance, for $P=0.08$ and $r=15$, where the power of
$\rho_{(0.05,0.7)}(1)$
is much higher than the power of $\rho_{(0.01,0.7)}(1)$ (and $\rho(1)$, $\rho(1)(x^2)$).

Now, we focus on a more detailed assessment of the parameter impact on the test's performance. To do this, we consider the GARCH(1,1) model with the parameters $\omega_0 = 0.001$,  $\omega_1 = 0.6$ and $\omega_2 = 0.2$. Also, we fix the quantile split $p = 0.01$ and $q=0.65$, and focus on the effects of the additional noise parameters. The results are presented in Figure~\ref{plot:garch:par_additive} (discrete noise) and Figure~\ref{plot:garch_stable} ($\alpha-$stable noise).

\begin{figure}[htp!]
\begin{center}
\includegraphics[width=0.28\textwidth]{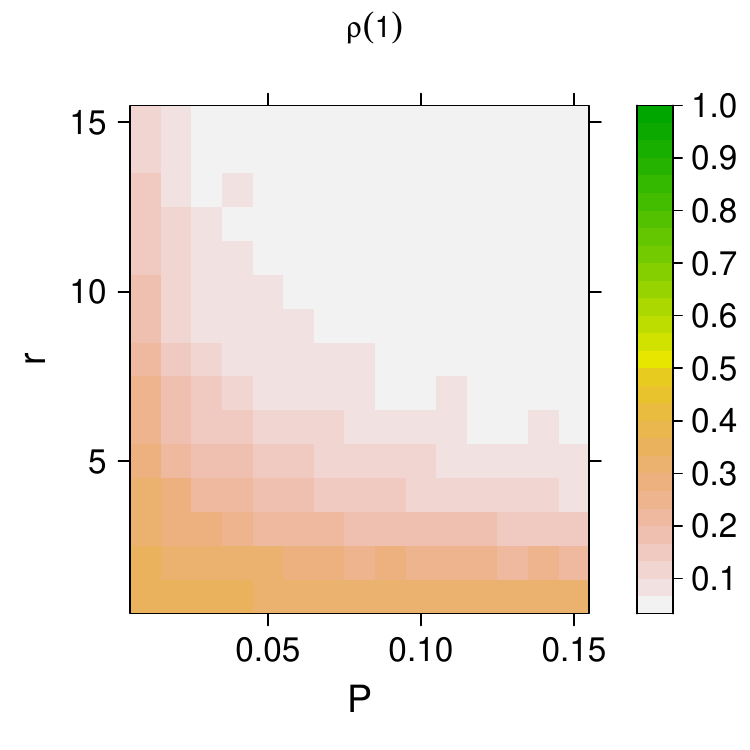}
\includegraphics[width=0.28\textwidth]{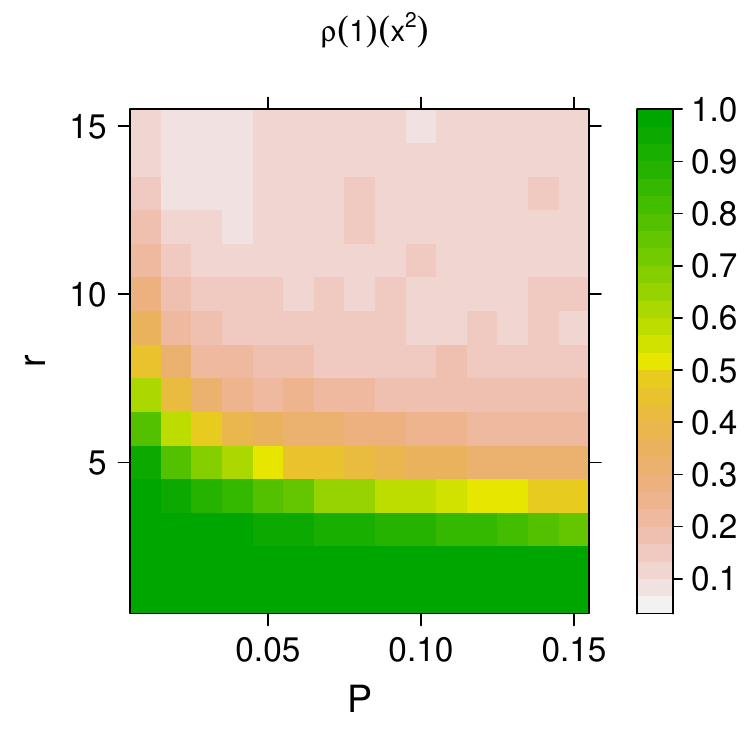}
\includegraphics[width=0.28\textwidth]{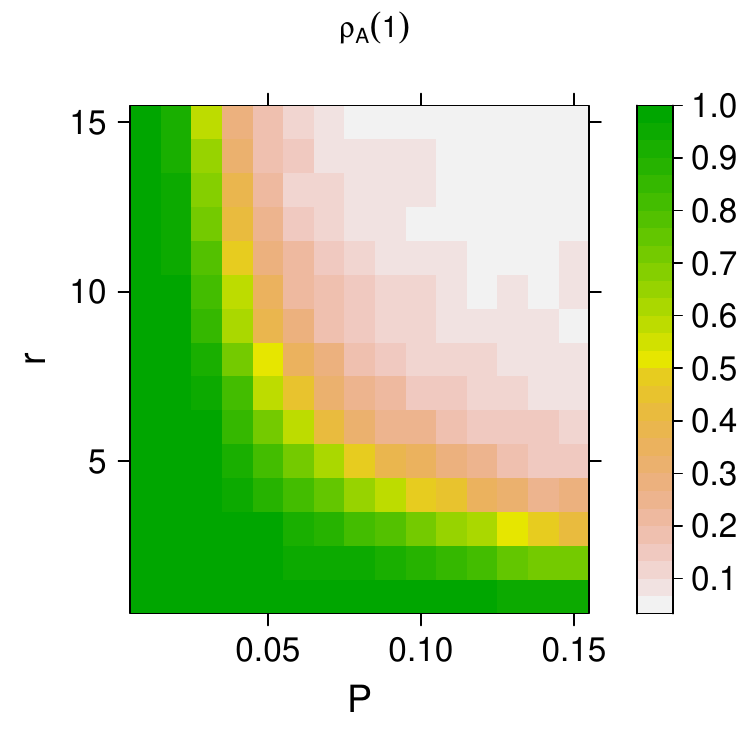}
\end{center}
\caption{
Empirical power of the autocorrelation (left panel), autocorrelation on squared data (middle panel), and conditional autocorrelation (right panel) for the GARCH(1,1) model with $\omega_0=0.001$, $\omega_1=0.6$ , $\omega_2=0.2$,  and the discrete noise $\{\psi_t^1\}_{t\in\mathbb{Z}}$ with $r\in (0,15)$ and $P\in (0,15)$. The results are based on $1000$ Monte Carlo samples with size $n=1000$. In the conditional autocorrelation, we set $p=0.01$ and $q=0.65$. 
}\label{plot:garch:par_additive}
\end{figure}

\begin{figure}[htp!]
\begin{center}
\includegraphics[width=0.28\textwidth]{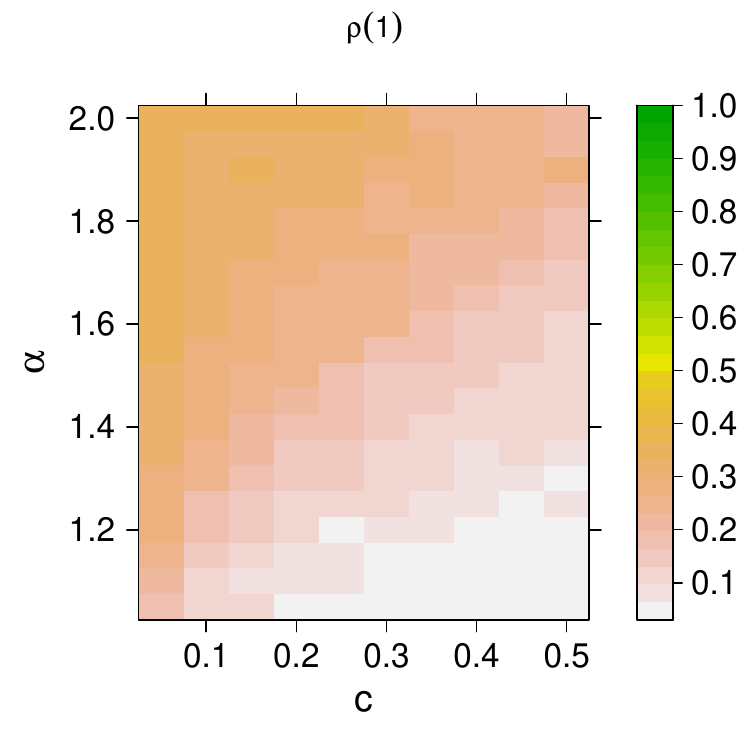}
\includegraphics[width=0.28\textwidth]{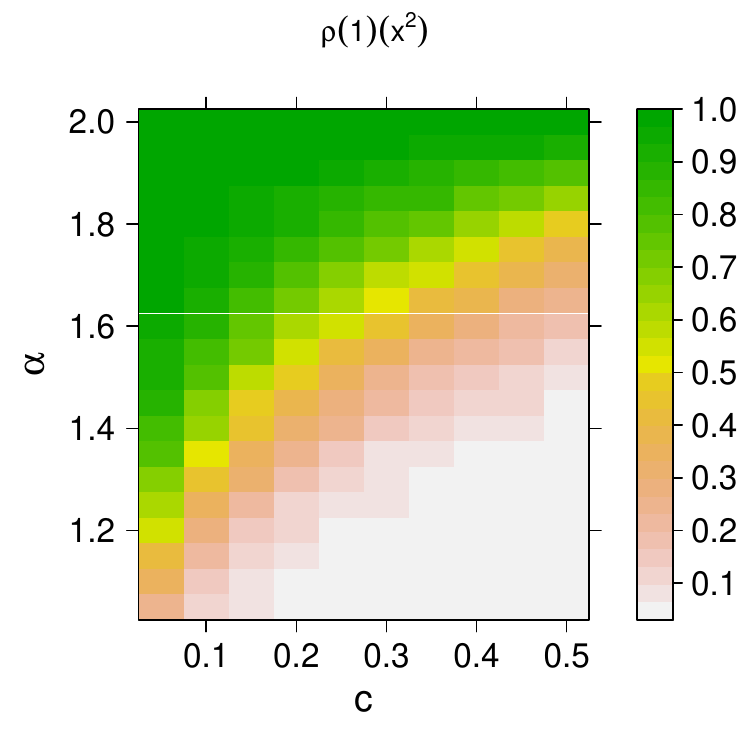}
\includegraphics[width=0.28\textwidth]{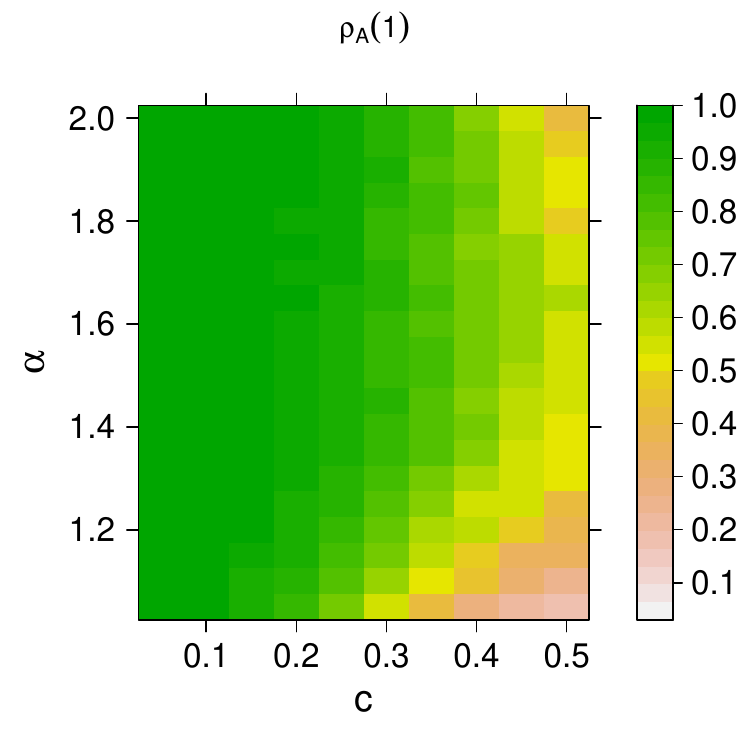}
\end{center}
\caption{
Empirical power of the autocorrelation (left panel), autocorrelation on squared data (middle panel), and conditional autocorrelation (right panel) for the GARCH(1,1) model with $\omega_0=0.001$, $\omega_1=0.6$ , $\omega_2=0.2$,  and the $\alpha-$stable noise $\{\psi_t^2\}_{t\in\mathbb{Z}}$ with $c\in (0,0.5)$ and $\alpha\in (1,2)$. The results are based on 1000 Monte Carlo samples with size $n=1000$. In the conditional autocorrelation we set $p=0.01$ and $q=0.65$. 
}\label{plot:garch_stable}
\end{figure}

Figure~\ref{plot:garch:par_additive} confirms that the autocorrelation for the squared data outperforms the classical autocorrelation for GARCH models. This is consistent with the observation that GARCH processes consist of uncorrelated but dependent observations. The power of $\rho(1)(x^2)$ is particularly high for small values of the additional noise noise $r$ (irregardless of the parameter $P$). Still, the power of the conditional autocorrelation is higher in the majority of the considered cases. This is particularly visible for small $P$ ($P<0.03$) and relatively big $r$ ($r>8$), where the power of $\rho_{(0.01,0.65)}(1)$ is close to 100\% while the power of $\rho(1)(x^2)$ is below 20\%.

Similar results hold for the model with additional $\alpha-$stable noise; see Figure~\ref{plot:garch_stable}. Again, the autocorrelation on the squared data outperforms the classical autocorrelation but even higher power is obtained by the conditional autocorrelation. Only if $\alpha$ is close to $2.0$ and $c$ is relatively high ($c>0.35$), the power of $\rho(1)(x^2)$ is higher than the power of $\rho_{(0.01,0.65)}(1)$. This could be associated with the fact that for such a choice of parameters, the additional noise is close to a very regular Gaussian white noise, where the usage of more standard tools is advisable.

\FloatBarrier

\subsection{Empirical analysis of S\&P500 stock market data}\label{S:financial_auto}

In this section, we apply the proposed methodology to the empirical data. More specifically, we consider adjusted logarithmic daily returns for S\&P 500 stocks in the period 01/01/2022--01/01/2024 and check if they exhibit some form of serial dependence. Due to the changes in the index composition, we obtained the full historical record for 498 companies with logarithmic daily returns each. The full dataset is split into three subsets, with starting dates being 01/01/2020, 01/01/2021, and 01/01/2022, respectively, and the ending date 01/01/2024. The subsets consist of 1006, 753, and 501 observations for each company, respectively.

We start with analysing the conditional autocorrelation function of an exemplary time series contained in the underlying dataset for which the proposed framework detected short-range serial dependence while the unconditional method did not; see Example~\ref{ex:ACF} for a similar analysis with synthetic data. More specifically, we focus on Catalent (CTLT) daily returns in the period 01/01/2023--01/01/2024. In Figure~\ref{plot:PARA_hist} we present the corresponding returns (left panel) and the lag plot (right panel) with $lag=1$ while in Figure~\ref{plot:ACF} we show the autocorrelation function, the autocorrelation function for the squared data, and the conditional autocorrelation function corresponding to the quantile split $p=0.01$ and $q=0.65$. In the returns chart, there are visible outliers which affect the estimation of autocorrelation measures. 

Indeed, in the autocorrelation function and the autocorrelation for the squared data, there are substantial spikes for relatively large values of lag (e.g. $lag=16$). This could be linked to the presence of outliers in the data, e.g. in mid April and early May 2023. Apart from the relatively big lags, one can see that neither the autocorrelation nor the autocorrelation for the squared sample detected some form of autodependence in the data. However, the significant dependence is reported by the conditional autocorrelation function; see e.g. the spike at $lag=1$. Also, the filtering mechanism embedded in the construction of $\rho_A$ is able to successfully reduce the impact of outliers.

\begin{figure}[H]
\begin{center}
\includegraphics[width=0.7\textwidth]{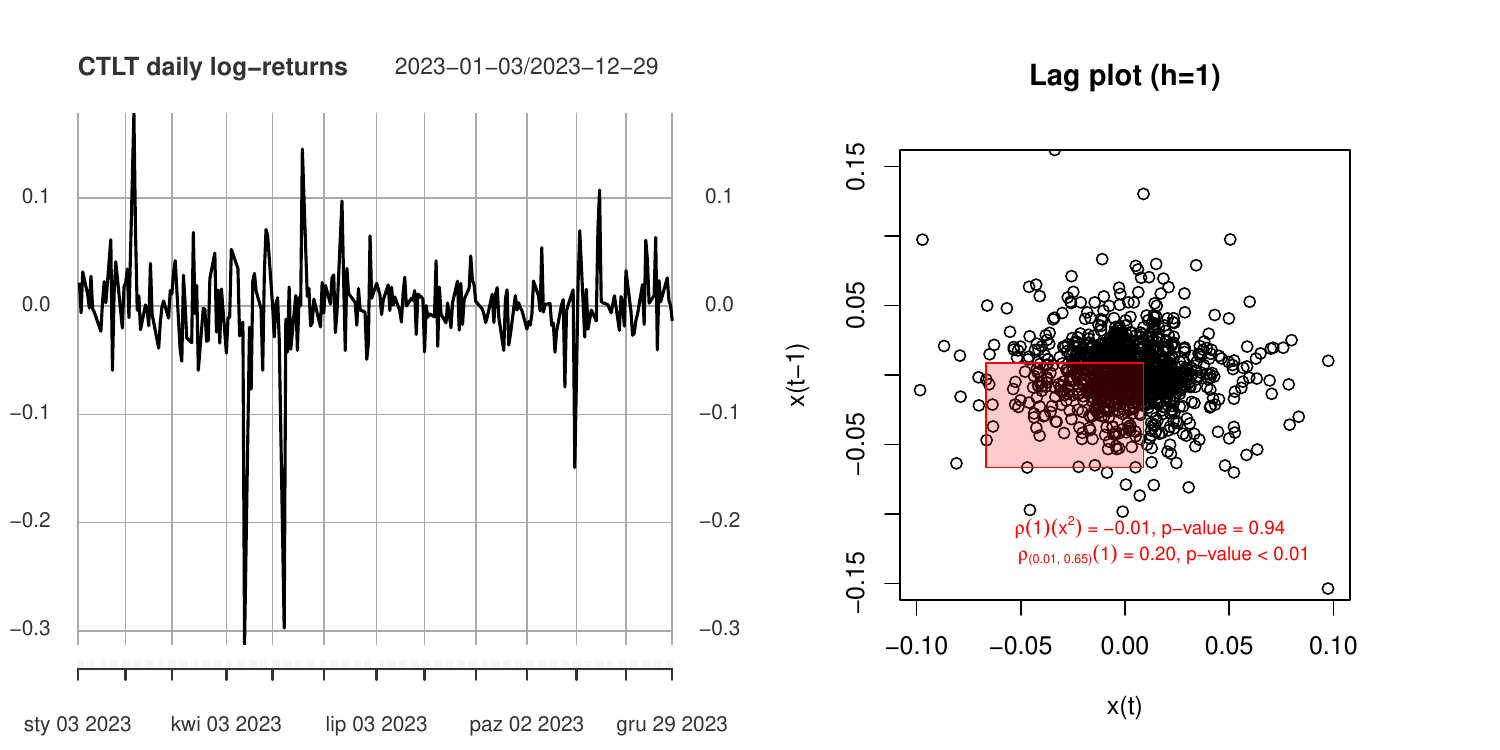}
\end{center}
\caption{Logarithmic daily rate of returns (left panel) and the corresponding lag plot (right panel) with $lag=1$ for CTLT in the period 01/01/2023-01/01/2024. The red region corresponds to the estimated set $A$ with $p=0.01$ and $q=0.65$.}\label{plot:PARA_hist}
\end{figure}

\begin{figure}[htp!]
\begin{center}
\includegraphics[width=0.32\textwidth]{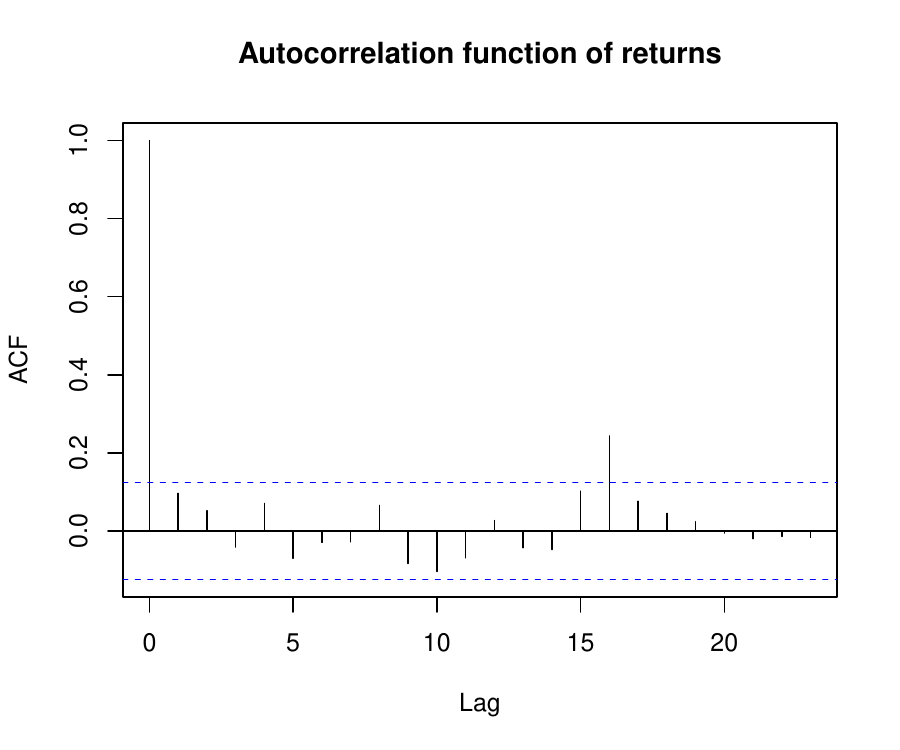}
\includegraphics[width=0.32\textwidth]{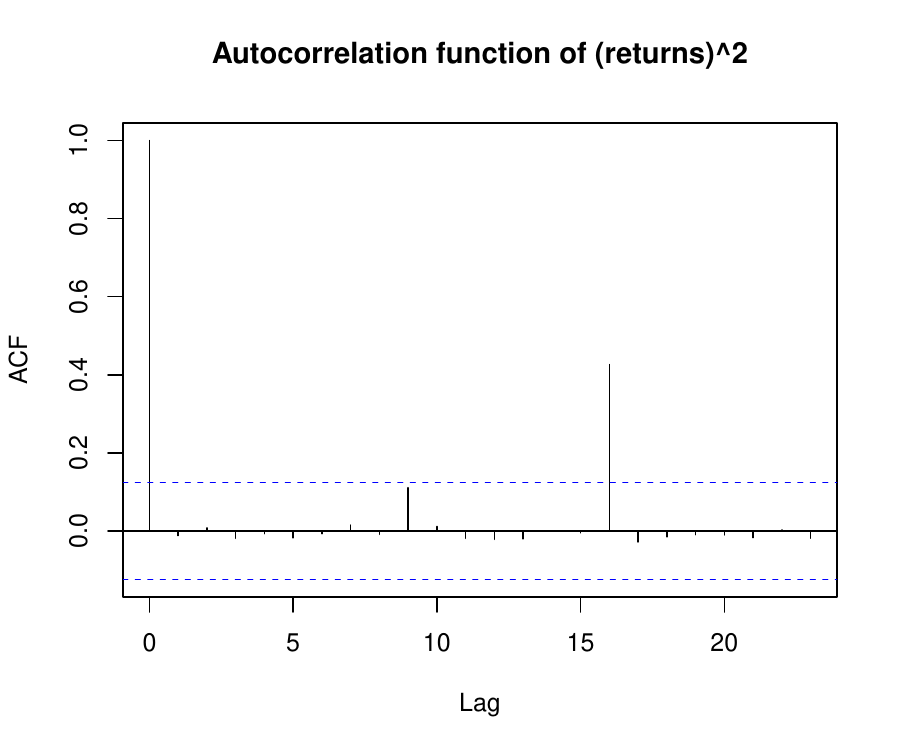}
\includegraphics[width=0.32\textwidth]{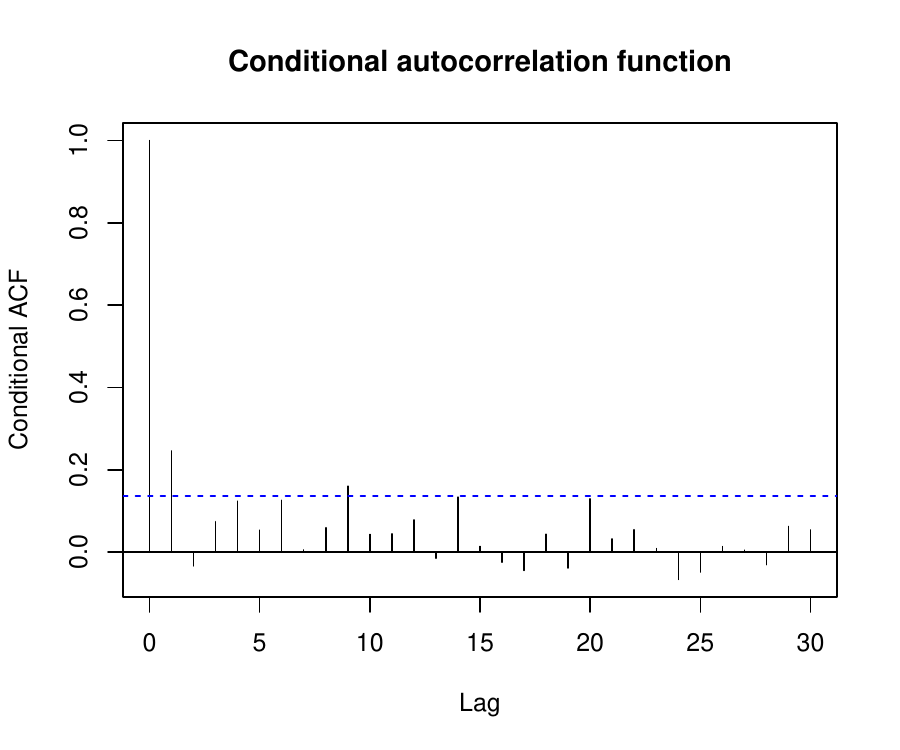}
\end{center}
\caption{Autocorrelation function (ACF) plots for CTLT stock price daily log-returns in the period 01/01/2023--01/01/2024. The left exhibit shows ACF applied to log-returns, the middle exhibit shows ACF applied to squared log-returns, while the right exhibit shows conditional ACF for quantile split $p =0.01$ and $q = 0.65$ applied to log-returns.}\label{plot:ACF}
\end{figure}

Next, we focus on unit lag $h=1$ and perform a more detailed analysis using the whole dataset. More specifically, we check the proportion of samples for which the conditional autocorrelation $\rho_A(1)$, with various choices of $0<p<q<1$, rejects the independence hypothesis. Also, we confront the results with the unconditional autocorrelation $\rho(1)$, and the unconditional autocorrelation applied to the squared sample denoted by $\rho(1)(x^2)$.
As already stated, a generic testing procedure described in Section~\ref{s4_intro} requires that the stationary distribution of the data is known in advance. Let us now discuss an extension of this framework which facilitates the relaxation of this assumption. In fact, it is enough to note that, under the stationarity assumption and assuming that the null hypothesis holds, the underlying data is i.i.d. Thus, instead of simulating Monte Carlo samples from the cumulative distribution function $F$, we apply a bootstrap procedure. More specifically, for each company and the corresponding rate of returns time series, we create 10,000 synthetic series by resampling with replacements from the original time series. Next, we compute the values of the considered statistics. i.e. conditional autocorrelation, autocorrelation, and autocorrelation for squared data, and define two-sided rejection regions by considering appropriate quantiles. Finally, for each method, we calculate the percentage of companies for which the value of the test static on the original sample lies in the corresponding rejection region.

The results are summarized in Table~\ref{table:PARA_scock}. Each sub-table corresponds to the different sample size, $n\in \{501,753,1006\}$, and a different type I error rate $\alpha_{I}\in\{0.05, 0.01\}$. For the conditional autocorrelation, we consider quantile splits $(p,q)\in \{(0.01,0.65), (0.01,0.75), (0.01,0.85), (0.15, 0.55), (0.55,0.85), (0.45,0.99)\}$.
In the table, we report the ratio of the samples with rejected null hypotheses (column ``Rejects''). Additionally, for the tests based on the conditional autocorrelations, we present the percentage of the samples for which the underlying test rejected the independence hypothesis while the test based on the autocorrelation for the squared data did not reject the null hypothesis (column ``U'').

\begin{table}[ht]
\centering
\begin{tabular}{|l|c|c|}
\hline
\multicolumn{3}{|c|}{$n=501$, $\alpha_{I}=0.05$}\\\hline
Statistic & Rejects (\%) & U (\%)  \\ \hline
$\rho(1)$  & $0.07$ &-\\
 $\rho(1)(x^2)$ &  $0.22$ &-\\ \hline
$\rho_{(0.15,0.55)}(1)$  & $0.05$ & 0.03\\
$\rho_{(0.55,0.85)}(1)$  & $0.05$ & 0.03\\
$\rho_{(0.01,0.65)}(1)$  & $0.25$ & 0.17\\
$\rho_{(0.01,0.75)}(1)$  & $0.25$ & 0.18\\
$\rho_{(0.01,0.85)}(1)$  & $0.19$ & 0.15\\
$\rho_{(0.45,0.99)}(1)$  & $0.11$ &0.10\\\hline
\end{tabular}
\,\,
\begin{tabular}{|l|c|c|}
\hline
\multicolumn{3}{|c|}{$n=753$, $\alpha_{I}=0.05$}\\\hline
Statistic & Rejects (\%) & U (\%) \\ \hline
$\rho(1)$  & $0.07$ &-\\
 $\rho(1)(x^2)$ &  $0.33$ &-\\ \hline
$\rho_{(0.15,0.55)}(1)$  & $0.05$ & 0.03\\
$\rho_{(0.55,0.85)}(1)$  & $0.05$ & 0.03\\
$\rho_{(0.01,0.65)}(1)$  & $0.35$ & 0.21\\
$\rho_{(0.01,0.75)}(1)$  & $0.34$ & 0.20\\
$\rho_{(0.01,0.85)}(1)$  & $0.22$ & 0.14\\
$\rho_{(0.45,0.99)}(1)$  & $0.16$ & 0.12\\\hline
\end{tabular}
\,\,
\begin{tabular}{|l|c|c|}
\hline
\multicolumn{3}{|c|}{$n=1006$, $\alpha_{I}=0.05$}\\\hline
Statistic & Rejects (\%) & U (\%)  \\ \hline
$\rho(1)$  & $0.51$ & -\\
 $\rho(1)(x^2)$ &  $0.93$ &-\\ \hline
$\rho_{(0.15,0.55)}(1)$  & $0.05$ & 0.00\\
$\rho_{(0.55,0.85)}(1)$  & $0.06$ & 0.00\\
$\rho_{(0.01,0.65)}(1)$  & $0.75$ & 0.05\\
$\rho_{(0.01,0.75)}(1)$  & $0.69$ & 0.03\\
$\rho_{(0.01,0.85)}(1)$  & $0.50$ & 0.03\\
$\rho_{(0.45,0.99)}(1)$  & $0.52$ & 0.03\\\hline
\end{tabular}

\vspace{0.3cm}

\begin{tabular}{|l|c|c|}
\hline
\multicolumn{3}{|c|}{$n=501$, $\alpha_{I}=0.01$}\\\hline
Statistic & Rejects (\%) & U (\%) \\ \hline
$\rho(1)$  & $0.02$ & -\\
 $\rho(1)(x^2)$ &  $0.10$ & -\\ \hline
$\rho_{(0.15,0.55)}(1)$  & $0.01$ & 0.00\\
$\rho_{(0.55,0.85)}(1)$  & $0.01$ & 0.01\\
$\rho_{(0.01,0.65)}(1)$  & $0.11$ & 0.09\\
$\rho_{(0.01,0.75)}(1)$  & $0.11$ & 0.09\\
$\rho_{(0.01,0.85)}(1)$  & $0.08$ & 0.07\\
$\rho_{(0.45,0.99)}(1)$  & $0.02$ & 0.03\\\hline
\end{tabular}
\,\,
\begin{tabular}{|l|c|c|}
\hline
\multicolumn{3}{|c|}{$n=753$, $\alpha_{I}=0.01$}\\\hline
Statistic & Rejects (\%) & U (\%) \\ \hline
$\rho(1)$  & $0.02$ & -\\
 $\rho(1)(x^2)$ &  $0.17$ & -\\ \hline
$\rho_{(0.15,0.55)}(1)$  & $0.01$ & 0.01\\
$\rho_{(0.55,0.85)}(1)$  & $0.00$ & 0.00\\
$\rho_{(0.01,0.65)}(1)$  & $0.15$ & 0.12\\
$\rho_{(0.01,0.75)}(1)$  & $0.15$ & 0.12\\
$\rho_{(0.01,0.85)}(1)$  & $0.10$ & 0.08\\
$\rho_{(0.45,0.99)}(1)$  & $0.02$ & 0.05\\\hline
\end{tabular}
\,\,
\begin{tabular}{|l|c|c|}
\hline
\multicolumn{3}{|c|}{$n=1006$, $\alpha_{I}=0.01$}\\\hline
Statistic & Rejects (\%) & U (\%) \\ \hline
$\rho(1)$  & $0.38$ & -\\
 $\rho(1)(x^2)$ &  $0.78$ &-\\ \hline
$\rho_{(0.15,0.55)}(1)$  & $0.01$ & 0.00\\
$\rho_{(0.55,0.85)}(1)$  & $0.02$ & 0.00\\
$\rho_{(0.01,0.65)}(1)$  & $0.47$ & 0.09\\
$\rho_{(0.01,0.75)}(1)$  & $0.43$ & 0.08\\
$\rho_{(0.01,0.85)}(1)$  & $0.24$ & 0.04\\
$\rho_{(0.45,0.99)}(1)$  & $0.27$ & 0.05\\\hline

\end{tabular}
\caption{
Percentage of samples (out of 498 in total) for which the underlying test statistic rejected the null hypothesis. In the tables, $n$ corresponds to the individual sample size while $\alpha_{I}$ is the confidence threshold, i.e. type I error rate. The column ``U`` presents the percentage of samples in which the test based on the corresponding conditional correlation rejected the null hypothesis while the test based on the (unconditional) correlation for squared data did not reject it.
}\label{table:PARA_scock}
\end{table}

Results presented in Table~\ref{table:PARA_scock} confirm the observations from the literature that the (unconditional) autocorrelation does not reject the independence hypothesis for the data, but the serial dependence is frequently reported by the (unconditional) autocorrelation for the squared data. However, it should be noted that even bigger rejection rates are reported by the conditional autocorrelation with $p=0.01$ and $q=0.65$. For example, for a two-year period ($n=501$) and the type I error rate equals 0.05, the test based on the conditional autocorrelation rejected the independence hypothesis for $25\%$ of the samples while the test based on the autocorrelation for squared data scored 22\% rejection rate. In fact, these two methods could be seen as complementary frameworks, as nearly 70\% of the dependence detections for the conditional autocorrelation were not detected by the autocorrelation for squares (statistic $U$ equals 17\%). This effect is also visible for a three-year period data ($n=753$), where the performance of $\rho(1)(x^2)$ is better than the conditional autocorrelation. Still, $\rho_{(0.01,0.65)}(1)$ was able to uniquely detect serial dependence in 5\% of the samples. This combined with the rejection rate for the autocorrelation for squared data (93\%) covers almost all considered samples.


\section*{Acknowledgements}
Marcin Pitera and Agnieszka Wy\l{}oma\'{n}ska acknowledge support from the National Science Centre, Poland, via project 2020/37/B/HS4/00120. Damian Jelito acknowledges support from the National Science Centre, Poland, via project 2020/37/B/ST1/00463.

\begin{footnotesize}
\bibliographystyle{agsm}
\bibliography{mybibliography}
\end{footnotesize}


\end{document}